\begin{document}
%
\title{Energy-Efficient Buffer-Aided Relaying Systems with Opportunistic Spectrum Access}

\author{Kunlun Wang, \IEEEmembership{Member,~IEEE},
        Qingqing Wu, \IEEEmembership{Member,~IEEE}, Wen Chen, \IEEEmembership{Senior~Member,~IEEE}, Yang Yang, ~\IEEEmembership{Fellow,~IEEE}, and Derrick Wing Kwan Ng, \IEEEmembership{Senior~Member,~IEEE} 
\thanks{The work of K. Wang is supported by the National Natural Science Foundation of China (NSFC) under grant 61801463. The work of W. Chen is supported by the National Key R\&D Program of China under Grant 2018YFB1801102, the NSFC under grant 61671294, the National Huge Project 2018ZX03001009-002, and the STCSM Science and Technology Innovation Program 17510740700. The work of D. W. K. Ng is supported by funding from the UNSW Digital Grid Futures Institute, UNSW, Sydney, under a cross-disciplinary fund scheme and by the Australian Research Council's Discovery Project (DP190101363).}
\thanks{K. Wang, and Y. Yang are with the School of Information Science and Technology, ShanghaiTech University, Shanghai, China, and also with the Shanghai Institute of Fog Computing Technology~(SHIFT), Shanghai, China; (e-mails: \{wangkl2, yangyang\}@shanghaitech.edu.cn).
Q. Wu is with the Department of Electrical and Computer Engineering, National University of Singapore, Singapore; (e-mail: elewuqq@nus.edu.sg).
W. Chen is with the Shanghai Institute of Advanced Communications and Data Sciences, Department of Electronics Engineering, Shanghai Jiao Tong University, China; (e-mail: wenchen@sjtu.edu.cn).
D. W. K. Ng is with the School of Electrical Engineering and Telecommunications,
The University of New South Wales, Sydney NSW 2052, Australia; (e-mail: w.k.ng@unsw.edu.au).

Part of this work is presented at the 2017 International Conference on Wireless Communications and Signal Processing (WCSP), Nanjing, China, October 11-13, 2017 \cite{2017-Wang-cRelay}. 
}
}


\maketitle

\begin{abstract}
In this paper, an energy-efficient cross-layer design framework is proposed for cooperative relaying
networks, which takes into account the influence of spectrum utilization probability. Specifically, random arrival traffic is considered and an adaptive modulation and coding (AMC) scheme is adopted in the cooperative transmission system to improve the system performance. The average packet dropping rate of the relay-buffer is studied at first. With the packet dropping rate and stationary distribution of the system state, the closed-form expression of the delay is derived. Then the energy efficiency for relay-assisted transmission is investigated, which takes into account the queueing process of the relay and the source. In this context, an energy efficiency optimization problem is formulated to determine the optimum strategy of power and time allocation for the relay-assisted cooperative system. Finally, the energy efficient switching strategy between the relay assisted transmission and the direct transmission is obtained, where packet transmissions have different delay requirements. In addition, energy efficient transmission policy with AMC is obtained. Numerical results demonstrate the effectiveness of the proposed design improving the energy efficiency.

\end{abstract}
\begin{IEEEkeywords}
Energy efficiency and delay tradeoff, cross-layer design, opportunistic spectrum access, mode
switching.
\end{IEEEkeywords}
\IEEEpeerreviewmaketitle

\section{Introduction}
\IEEEPARstart{D}{uring} the past decades, wireless communication systems have been designed with the main focus on improving the network throughput or spectral efficiency \cite{2015-Ling-capacity}. However, with the rapid realization of internet of things (IoT), 
power consumptions at both base stations~(BSs) and resource-limited devices such as user terminals are no longer sustainable, which is attributed to the fundamental bandwidth-power and delay-power tradeoffs \cite{2016-Zhang-green,2018-yang-5G}. As reported, the network energy consumption should be decreased by a factor of $1,000$ in the upcoming fifth-generation (5G) networks while without compromising the quality of sevice (QoS) \cite{2017-Wu-green}. As a result, energy efficiency, which is measured by bits-per-joule, has emerged as a key figure of merit and become the most widely adopted design metric for green IoT systems. In fact, extensive research efforts have been made to improve the energy efficiency for various kinds of wireless communication applications, such as orthogonal frequency division multiplexing access (OFDMA) \cite{2019-Nam-EE,2015-Wu-EE,2013-Kwan-EE,2012-Kwan-EE}, multiple-input multiple-output (MIMO) \cite{2018-Hien-MMIMO,2013-Xu-EE,2013-Rui-EE,2014-Ge-EE}, millimeter-wave (mmWave) communications~\cite{2016-Gao-EEwave,2017-Niu-EE}, 
cognitive radio~\cite{2013-Mao-cognitive,2013-Fu-Cognitive}, small cell~\cite{2017-Zhang-EE,2016-Wu-scell}, 
and wireless powered communications~\cite{2013-Kwan-power,2016-Wu-power}, etc. However, most of the previous research works mainly focus on energy-efficient designs for physical layer transmission. In the future IoT networks, traffic delay requirements are diversified from the upper layers imposing challenges to physical layer design in resource allocation. Thus far, many studies have been carried out on cross-layer design to improve the energy efficiency of the whole system\cite{2011-Qiao-EE,2016-Kwan-QoS, 2015-Nomikos-EE,2017-Wang-cRelay}, which is expected to be more appealing and practical for the next generation of wireless networks.

On the other hand, the gap between the designing of high data rate and spectrum scarcity motivates the advent of cognitive spectrum access, which has been regarded as a revolutionary technology to tackle such a challenge \cite{2010-Yuan-Cognitive}. When the spectrum is detected to be unoccupied by the primary system, the secondary system could utilize the free bands temporarily to perform its own transmission. Then there exists a utilization probability for each transmission link of the secondary system. Furthermore, the utilization probability may influence the buffer-overflow and the delay performance of the packet for the secondary system, leading to different energy efficient strategies. In a cellular network, each secondary user has a chance to transmit data over the free bands. Due to the severe channel fading or shadowing, the selected secondary user may terminate the transmission from the direct link to save energy. In this case, any other nearby secondary users with better channel conditions can act as a relay to assist the data transmission to improve the energy efficiency of the secondary system. Furthermore, since the secondary users may require different delay-aware applications over the cooperative network, providing heterogenous QoS of delays for energy-efficient cooperative systems, has become critical and necessary.

\subsection{Related Works}
Relay nodes can help source node to transmit packets to destination mode in cooperative systems~\cite{2009-Yang-Relay,2018-Zhu-D2DCooperative}, which has been
extensively studied recently. In addition, relay equipped with buffer can further improve system performance and introduce new degrees of freedom for system design~\cite{2014-Zlatanov-buffer,2016-Nikolaos-buffer,2015-Nikola-buffer,2016-Qiao-buffer}. In particular, various cooperative schemes have been proposed to exploit cooperative diversity for improving the physical performance~\cite{2016-Huang-Delay,2011-Ikki-diversity,2011-Amir-diversity}, 
such as outage probability and signal-to-noise ratio~(SNR).
In practical systems, the data link layer performance metric, e.g., delay, also plays an important role in wireless networks. Since delay can influence the performance of the cooperative systems, it also has a significant impact on protocol and system designs, especially when the system is required to guarantee the performance of delay sensitive services and delay tolerant services simultaneously.

The notion of delay in cooperative wireless networks has been considered in recent literatures. In~\cite{2012-Rong-delay}, the performance metrics defined by stable throughput region and delay are evaluated for a packet transmitted to the destination through either a direct link or relay links. The delay-optimal link selection for a two-hop three-node cooperative network with bursty packet arrivals is studied in\cite{2011-cui-delay}. Multi-hop cases are considered in \cite{2013-xue-delay}. Motivated by real-time applications having stringent delay constraints, optimal resource allocation for delay-limited cooperative communication in time varying wireless networks is recently discussed in~\cite{2012-Lin-QoS}.

Most of the existing works on cooperative communications, e.g., \cite{2012-Rong-delay,2011-cui-delay,2013-xue-delay,2012-Lin-QoS}, are based on throughput optimization under delay constraint. However, all these works do not consider the energy efficient communication.
In~\cite{2015-Nomikos-EE}, 
delay constrained energy-efficient problem in cooperative wireless networks is studied. However, practical adaptive modulation and coding~(AMC) is not considered. In~\cite{2012-Wang-Cooperative}, the authors investigate the AMC scheduling for a cooperative wireless system with multiple relays operating in a modified decode-and-forward protocol. However, adaptive scheduling strategy between the cooperative transmission and direct transmission is not considered. Although the scheduling strategy with direct transmission and relay-assisted transmission for cognitive radio network is studied in~\cite{2013-Rao-Cognitive}, the packet delay and energy efficiency are not considered. Furthermore, due to the multi-path channel fading, the noise, and the co-channel interference from primary system, it is difficult to guarantee the deterministic delay for communication services in cognitive radio networks. Thus, as a compromise scheme, it is a practical and ultimate goal to provide statistical delay guarantee for cooperative cognitive radio networks.

\subsection{Main Contributions}
In this work, we consider the cross-layer energy-efficient design approach for cooperative IoT networks with opportunistic spectrum access, which aims at taking the physical layer transmission and delay-aware service into account, the influence of spectrum utilization probability can also be obtained. Generally speaking, the AMC and cooperative transmission strategy are part of the physical layer decisions, the packet retransmission is controlled by the data link layer, and the delay statistic of the traffic is provided by the application layer. In this research direction, there are some works on cross-layer designs considered in~\cite{2012-Wang-Cooperative,2013-Rao-Cognitive,Wangkl-2015-EE}, where a similar system model is considered. In particular, the cross-layer design for cognitive relay systems is investigated in~\cite{2013-Rao-Cognitive}. However, our work is fundamentally
different from that of~\cite{2012-Wang-Cooperative,2013-Rao-Cognitive,Wangkl-2015-EE} in several significant aspects. First, our work aims to study the impact of delay-aware service on the design of energy-efficient scheduling, while \cite{2013-Rao-Cognitive,2012-Wang-Cooperative} focus more on the bit-error-rate (BER)-based scheduling. Second, our objective is to maximize the energy efficiency of the delay-aware service in cooperative IoT networks with opportunistic spectrum access. Based on such an objective, we derive an expression of the expected delay in a closed-form which captures the impacts of the queueing and
transmission delay in both the source and relay buffers. To the best knowledge of the authors, this important aspect has not been reported in the literature~\cite{2Wangkl-2015-EE}, yet. Besides, we have also derived the packet dropping rate for the buffer of the relay node, which is not considered in \cite{2013-Rao-Cognitive,2012-Wang-Cooperative} either. Third, we investigate the performance of energy efficiency by utilizing the AMC transmission and the time allocation for the direct transmission and relay-assisted transmission. Last but not least, for maximizing the system energy efficiency, we derive an optimized threshold by taking spectrum access probability into account, which determines the selection between direct transmission and relay-assisted transmission. This contribution is novel and different from the previous works.

The challenge of this work is to derive the packet dropping rate at the relay node, which can influence the throughput and energy efficiency of the cooperative system. To overcome the challenge, we need to prove the existence of the stationary distribution of the cooperative system, and get the stationary distribution. Therefore, our contribution in this work can be summarized as follows: We obtain the closed-form expression of expected delay for the cooperative transmission system, which considers both the transmission and queueing delay in both the source and relay buffer. Then we obtain the closed-form expression for the system throughput by taking into account the packet drop caused by both the transmission error and the buffers overflow. Correspondingly, the average energy consumption of the cooperative system is also derived. Considering the opportunistic spectrum access, we reveal the intrinsic relationship between the energy-efficient scheduling strategy and different delay requirements. In particular, we show that the energy efficient scheduling strategy is to adaptively switch between direct transmission and relay-assisted transmission based on different delay requirements. At last, we obtain the energy-efficient transmission policies for the inter-node link transmission and direct link transmission.

\subsection{Organization}
The rest of the paper is organized as follows. Section \uppercase\expandafter{\romannumeral2} introduces the system model and problem formulation, including the network model, the queueing model and the traffic model, the
channel model, and the medium access model. In Section \uppercase\expandafter{\romannumeral3}, we investigate the energy-efficiency for the relay assisted transmission and the direct transmission. In Section \uppercase\expandafter{\romannumeral4}, we analyze the energy-efficient scheduling for the delay-aware service based on the theoretical results obtained in Section \uppercase\expandafter{\romannumeral3}. Section \uppercase\expandafter{\romannumeral5} presents numerical results to demonstrate the performance
of the proposed cross-layer design. Finally, we conclude the paper in Section~\uppercase\expandafter{\romannumeral6}.

\begin{figure}[!t]
\begin{center}
\includegraphics [width=2.3in]{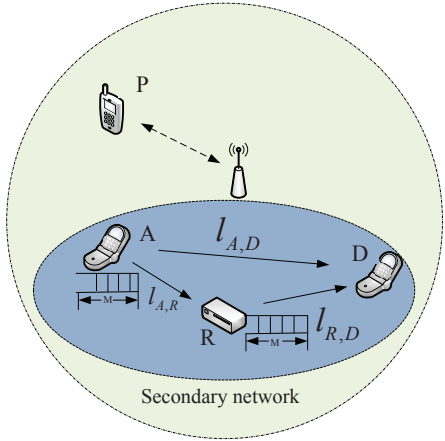}
\caption{System model.} \label{Fig-system_model} 
\end{center}
\end{figure}

\section{System Model}
\subsection{Network Model}
As shown in Fig.~\ref{Fig-system_model}, we consider a cooperative secondary system which consists of three nodes
that share the same spectrum with a legacy primary system (PS) node $P$. Nodes $A$, $R$, and $D$, are the source, relay, and destination of the secondary system, respectively. Specifically, the relay node adopts the decode-and-forward~(DF) strategy to assist the end-to-end transmission. Due to the bursty service in the PS, the secondary system can access the spectrum of the PS in an opportunistic manner with a certain probability. 
We assume that source node $A$ generates data traffic and communicates with node $D$. In the cooperative secondary network, the data packets can be transmitted via the direct link, $l_{A,D}$, or via the relay path composed of links $l_{A,R}$ and $l_{R,D}$ shown in Fig.~\ref{Fig-system_model}. For each link, i.e., $l_{A,D}$, $l_{A,R}$, or $l_{R,D}$, the secondary system will terminate the transmission if one of the following situation occurs: 1) the PS occupies the spectrum during the transmission; 2) the packet error rate~(PER) of the link $l_{A,R}$ exceeds a maximum tolerable level due to deep channel fading.

\subsection{Traffic and Queueing Models}
%

\textcolor{black}{We assume that the incoming traffic of source $A$ is random, which follows Poisson distribution. A wide range of multimedia traffic can be represented with an Poisson model, which is accurate and reasonable~\cite{2014-Mark-MMPP}.} Specifically, the average packet arrival rate from the application layer is $\bar{\lambda}$. Besides, both the source-buffer and the relay-buffer maintain the first-in-first-out (FIFO) queue and the packet streams have the size of $L$ bits.
Assume that the queueing model of the source node or the relay node is a single server M/G/1 queue~\cite{2008-Gross-Queueing}. 

For the source node $A$, the M/G/1 model assumes the single server with Markovian or memoryless arrivals at an average rate $\bar{\lambda}$ and a general service distribution. Due to the general service distribution, denote by
$\chi_n$ the service rate at the physical link corresponding to channel state $n$. In addition, we assume that the set of service state is $\Psi=\left\{\chi_1, \cdots, \chi_N\right\}$. 

For simplicity, we assume that the buffers of the source and the relay have finite capacity in each of which, denoted as $M$. Due to the bursty arrival traffic, the queues of the nodes may be empty sometimes or full some other times. Therefore, for the latter case, the incoming packets may be dropped at both of the source node and the relay node because of the bursty traffic arrived and the finite buffer size. We assume that the packets dropped by the source and relay nodes are regarded as lost.

\subsection{Channel Model}
All inter-node wireless channels are assumed to have Nakagami-$m$ fading with the same fading parameter $m$, and are block fading with additive white Gaussian noise~(AWGN).

In order to get a well defined relation between the channel quality and received signal-to-noise ratio~(SNR), the transmit signal power of the source node is constant at $\bar{e}$. \textcolor{black}{The channel state information (CSI) is supposed to be perfectly known by the receiver with training, then the channel quality can be captured by the received SNR $\mathcal{S}$.} Due to block fading, $\mathcal{S}$ remains invariant within each transmission frame but can vary from frame to frame. It should be noted that the received SNR $\mathcal{S}$ per frame of a fading channel follows a Gamma distribution, which has probability density function~(PDF) as
\begin{equation}\label{pdf}
f_{\bar{\mathcal{S}}}(\mathcal{S})=\frac{m^{m}\mathcal{S}^{m-1}}{\bar{\mathcal{S}}^{m}\Gamma (m)}\mbox{exp}\left (-\frac{m\mathcal{S}}{\bar{\mathcal{S}}} \right ),
\end{equation}
where $\bar{\mathcal{S}}\triangleq E\left \{ \mathcal{S} \right \}$ is the average received SNR and $\Gamma(m)\triangleq \int_{0}^{\infty }t^{m-1}e^{-t}dt$ is the Gamma function~\cite{2004-Liu-AMC}. 
We assume that inter-node channels of $l_{A,R}$ and $l_{R,D}$ are independent identically distributed~(i.i.d.) with the average SNR $\bar{\mathcal{S}}$ and $\bar{\mathcal{S}}_{R}$, respectively, and the average SNR of the direct channel $l_{A,D}$, denoted by $\bar{\mathcal{S}}_{S,D}$.

For packet transmission at each inter-node link, the adaptive modulation and coding~(AMC) is considered. In particular, the total available number of AMC transmission modes is $N$. To select an AMC mode, the received SNR needs to fall into the corresponding interval. Then, we can divide the entire received SNR range into $N+1$ nonoverlapping consecutive intervals, which can be denoted as $\left \{ \mathcal{S}_n \right \}^{N+1}_{n=0}$,
where the boundary points is given by $\mathcal{S}_0=0$ and $\mathcal{S}_{N+1}=+\infty$.
Based on the above description, the channel is in state $n$ satisfying $\mathcal{S}\in[\mathcal{S}_n,\mathcal{S}_{n+1})$. Based on~\eqref{pdf}, we can obtain the probability that the channel being in state $n$ as
\begin{equation}\label{prn}
P_r(n)=\int_{\mathcal{S}_n}^{\mathcal{S}_{n+1} }f_{\bar{\mathcal{S}}}(\mathcal{S})d\mathcal{S}.
\end{equation}
Similarly, the probabilities that the channel being in state $n$ for the relay-to-destination link and the source-to-destination link can be obtained as $P^R_r(n)$ and $P^{S,D}_r(n)$.

Denote by $b_n$ the transmission modulation when the channel is in state $n$.
Specifically, if the receiver cannot correctly decode a packet, the transmitter will be notified to repeat transmitting the packet where the maximum number of retransmissions is $N^{\max}_r-1$.
In contrast, when the receiver correctly decodes a packet, it will feed back an acknowledgement~(ACK) packet to the transmitter. However, in some cases for which the packet cannot be correctly decoded at the receiver even after $N_r^{\max}-1$ times of retransmissions, then this packet will be dropped and a packet loss will be declared. Therefore, packet loss may happen in each link of the cooperative system. \textcolor{black}{With the average packet error rate of transmissions, both the destination node in direct link and the relay node can fail to decode the received packet, an negative acknowledgement~(NACK) packet will be transmitted to notify the source node. If the destination node successfully receives the packet, but the relay node fails to receive the packet, then the ACK packet will be transmitted, the packet will be transmitted through the direct link. On the other hand, if the relay node successfully receives the packet in the relay-assisted transmission, but the direct source-to-destination transmission fails, then the ACK packet will be transmitted, the packet will be transmitted through the relay-assisted link.} In practice, the probability of packet loss is regained to no larger than a threshold $P_{\mathrm{loss}}$ to satisfy the quality-of-service~(QoS). It is worth pointing out that the service rates of the source node and relay node depend not only on the transmission modulation $b_n$ but also the maximum retransmission times $N^{\max}_r-1$.

\subsection{Medium Access Model}
In general, the spectrum occupancy of the PS can be modeled as a continuous-time Markov chain with available~(the link is idle) and unavailable~(the link is busy) states~\cite{2008-Zhao-Opportunistic}. In this work, we assume that the transmissions of the PS and secondary cooperative system are both continuous. Thus, the spectrum occupied times from the secondary cooperative system are independent and exponentially distributed with aggregated parameter $q_{i,j}^{-1}$ for the available state and $u_{i,j}^{-1}$ for the unavailable state on link $l_{i,j}$ ($i,j\in \left\{A,R,D\right\}$). Under this model, the stationary probability of the available and unavailable states are respectively given by~\cite{2008-Zhao-Opportunistic} as
\begin{equation}\label{available}
a_{i,j}^{a}=\frac{q_{i,j}}{q_{i,j}+u_{i,j}},
\end{equation}
and
\begin{equation}\label{unavailable}
a_{i,j}^{u}=\frac{u_{i,j}}{q_{i,j}+u_{i,j}},
\end{equation}
respectively.

\subsection{Problem Statement}
In the considered cooperative network, the system power consumption for transmitting one packet in the secondary network is denoted by $\mathcal{P}_{(T)}$, which indicates the total amount of energy consumed in one second. Accordingly, the system throughput is denoted by $\mathcal{R}$, which indicates the number of packets which are transmitted successfully without error in one second. In general, the energy efficiency is defined as the ratio of average system throughput over the total system power consumption, i.e.,
\begin{equation}\label{eedefinition}
\eta_{ee} \triangleq \frac{\mathcal{R}}{\mathcal{P}_{(T)}}\quad\mbox{packets/joule},
\end{equation}
which indicates the number of successfully received packets per joule of energy consumed.

Since different delay requirements of the traffic and spectrum access probabilities may result in different energy efficiencies of the cooperative network, we should optimize the selection of transmission mode between direct transmission and relay-assisted transmission to maximize the system energy efficiency. While the maximum tolerable delay is $\mathcal{D}_0$, we call the selection of transmission mode switching strategy $\varphi$. For the packet transmission, AMC schemes are used for both the source node $A$ and relay node $R$. Then, we are interested in getting the rate adaption policies maximizing the energy efficiency by taking into account the maximum tolerable packet delay $\mathcal{D}_0$. Specifically, the rate adaption policy is explicitly represented by the probability distribution of modulation $b_n$, $n=\{1,\cdots,N\}$. Denote the rate adaptive policy as $\omega$, which depends on the $\mathcal{S}_n$ and $\bar{\mathcal{S}}$.
Mathematically, the design of transmission mode switch strategy and AMC rate adaption policy can be formulated as
\begin{equation}\label{eeproblem}
\max_{\left\{P_r(n)\right\},\varphi}\left\{\eta_{ee}| \mathcal{D}_{(T)}\leq \mathcal{D}_0, \varphi\in \left\{d,r\right\} \right\},
\end{equation}
where $\mathcal{D}_{(T)}$, $d$ and $r$ represent the average packet delay from the arrival in the source buffer until the destination correctly received, the direct transmission and the relay assisted transmission, respectively. Based on~\eqref{prn}, the problem is equivalent to $\max_{\bar{\gamma},\varphi}\left\{\eta_{ee}| \mathcal{D}_{(T)}\leq \mathcal{D}_0, \varphi\in \left\{d,r\right\} \right\}$.

\section{Energy Efficiency Analysis}
In this section, we evaluate the delay and the energy efficiency of a packet transmission in the cooperative network with opportunistic spectrum access.
When the direct transmission is not possible, i.e., the received SNR is smaller than a threshold, the source node selects the relay to assist the transmission to achieve better performance. We denote the total time duration of each transmission period as $T\triangleq T_1+T_2$ and denote the ratio of the time allocation in the phase of source-to-relay over the whole period as $\alpha\triangleq T_1/T\in (0,1)$, where $T_1$ and $T_2$ are the transmission time durations for source-to-relay link and relay-to-destination link, respectively.

\subsection{Throughput Analysis}
\subsubsection{Service Rate}
The source node selects transmission modulation based on AMC, which is also adopted by the relay node. Intuitively, due to the long distance between the source and destination or limited power at the source, the packets transmission is assisted by a relay. At the destination, the receiver attempts to decode the packet from the source or the relay. Regarding the successfully receiving packets, an ACK packet will be transmitted to notify the source node at the end of the time slot. Once the source node receives the ACK packet, the corresponding packet is removed from the source buffer.

\textcolor{black}{Assume each packet contains $L$ bits. The transmission packet rate $R_p$ can be obtained as
\begin{equation}
R_p=\frac{R_sb^{l_{A,D}}_n}{L}.
\end{equation}
Now, with the packet error rate $PER_n$ at channel state $n$ and the probability that the channel being in state $n$, the average number of error transmitted packets can be derived as
\begin{equation}\label{error}
\sum_{n=1}^{N}\frac{R_sb^{l_{A,D}}_n}{L}P^{S,D}_r(n)PER_n.
\end{equation}
Similarly, the average total number of transmitted packets can be obtained as
\begin{equation}\label{total}
\sum_{n=1}^{N}\frac{R_sb^{l_{A,D}}_n}{L}P^{S,D}_r(n).
\end{equation}
With \eqref{error} and \eqref{total}, we can derive the average packet error rate of the direct source to destination link as
\begin{equation}
P_{LD}=\frac{\sum_{n=1}^{N}\frac{R_sb^{l_{A,D}}_n}{L}P^{S,D}_r(n)PER_n}{\sum_{n=1}^{N}\frac{R_sb^{l_{A,D}}_n}{L}P^{S,D}_r(n)}.
\end{equation}
Similarly, the average packet error rates of the source-to-relay link and the relay-to-destination link via the help of the relay can be derived as
\begin{equation}\small
\begin{aligned}
P_{L1}=\frac{\sum_{n=1}^{N}\frac{R_sb^{l_{A,R}}_n}{L}P^{S,D}_r(n)PER_n}{\sum_{n=1}^{N}\frac{R_sb^{l_{A,R}}_n}{L}P^{S,D}_r(n)},\\
P_{L2}=\frac{\sum_{n=1}^{N}\frac{R_sb^{l_{R,D}}_n}{L}P^{R}_r(n)PER_n}{\sum_{n=1}^{N}\frac{R_sb^{l_{R,D}}_n}{L}P^{R}_r(n)},
\end{aligned}
\end{equation}
respectively.}
Note, the metric of average packet error rate is defined by the ratio of the average number of packets cannot be successfully received over the total average number of transmitted packets, and the packet error rate $PER_n$ at channel state $n$ is denoted as~\cite[Eq. (5)]{2012-Wang-Cooperative}
\begin{equation}
PER_n=\left\{\begin{matrix}
\alpha_n \mbox{exp}(-g_n\mathcal{S}), & \mathcal{S}\geq \mathcal{S}_{pn}\\
1, & \mathcal{S}<\mathcal{S}_{pn}
\end{matrix}\right.
\end{equation}
and $\alpha_n$, $g_n$, and $\mathcal{S}_{pn}$ are mode dependent parameters.

Considering the single packet transmission with two phases, the destination node and the relay node may receive this packet. With the average packet error rate of transmissions, the probability that both the destination node and the relay node fail to decode the received packet in the first phase is given by
\textcolor{black}{\begin{equation}\label{error1}
P_1=P_{LD} P_{L1}.
\end{equation}}
On the other hand, if we assume that the relay node successfully receives the packet in the first phase, but the direct source-to-destination transmission fails in the first phase, then the packet cannot be successfully received at the destination with probability
\begin{equation}\label{error2}
P_2=P_{LD}(1-P_{L1})P_{L2}.
\end{equation}
Based on~\eqref{error1} and \eqref{error2}, the overall probability that a single packet cannot be successfully received by the destination via the relay-assisted transmission can be obtained as follows
\begin{equation}
P_0=P_1+P_2=P_{LD}\cdot P_L+P_{LD}(1-P_{L1})P_{L2}.
\end{equation}

Now, we study the packet's service rate via the following two lemmas.
\newtheorem{lemma}{Lemma}
\begin{lemma}\label{total-service}
For the relay-assisted transmission at channel state $n$, the packet's service rate from the source is
\begin{equation}\small
\chi_n=\frac{\alpha(1-P_1)}{f(\epsilon_n)+\bar{\epsilon}_nP_1\left [ 1-P_1^{N^{\max}_r}\left(1+N^{\max}_r (1-P_1)\right) \right]},
\end{equation}
where $f(\epsilon_n)=\epsilon_n(1-P^{N^{\max}_r+1}_1)(1-P_1)$ and $\epsilon_n$ is shown in \eqref{r-tau} at top of the page, which is the transmission time of the packet.
\begin{figure*}[!t]
\normalsize
\begin{equation}\small\label{r-tau}
\begin{aligned}
\epsilon_n&=\tau_n^{l_{A,D}}(1-P_{LD})+\tau_n^{l_{A,R},l_{R,D}}P_{LD}\\
&=\frac{L}{b_nR_{s}a^a_{A,D}}(1-P_{LD})+P_{LD}\left(\frac{L}{b_nR_{s}a^a_{A,R}}+\frac{L}{\bar{b}R_{s}a^a_{R,D}}\right)
\end{aligned}
\end{equation}
\hrulefill
\vspace*{10pt}
\end{figure*}
\end{lemma}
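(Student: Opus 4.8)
The plan is to obtain $\chi_n$ as the reciprocal of the expected per-packet service time, normalized by the time-split factor $\alpha$. First I would model the service of a single head-of-line packet at the source as a sequence of transmission rounds, where in each round the source transmits in the first phase and, if the direct link fails, the relay forwards in the second phase. Using the per-link average packet error rates $P_{LD}$, $P_{L1}$, $P_{L2}$ already derived, I would classify each round from the source's viewpoint: the packet leaves the source buffer (a \emph{success}) whenever either the destination decodes directly or the relay decodes, and the round must be repeated (a \emph{failure}) only when both fail, which by \eqref{error1} occurs with probability $P_1=P_{LD}P_{L1}$. This collapses the two-phase protocol into a Bernoulli chain with per-round success probability $1-P_1$.

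Next I would assign a time cost to each round. A successful round costs the expected transmission time $\epsilon_n$ of \eqref{r-tau}, which I would derive by conditioning on the direct link: with probability $1-P_{LD}$ the packet is delivered directly in time $\tau_n^{l_{A,D}}=L/(b_nR_s a^a_{A,D})$, and with probability $P_{LD}$ the relay path is used in time $\tau_n^{l_{A,R}}+\tau_n^{l_{R,D}}=L/(b_nR_s a^a_{A,R})+L/(\bar{b}R_s a^a_{R,D})$, where the opportunistic-access probabilities $a^a_{i,j}$ of \eqref{available} inflate the raw transmission times. A failed round, in which neither node decodes, costs the distinct time $\bar{\epsilon}_n$.

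The core computation is a truncated geometric sum. Writing the expected service time as $\tilde{S}=\sum_{j=1}^{N^{\max}_r+1}(1-P_1)P_1^{\,j-1}\bigl[(j-1)\bar{\epsilon}_n+\epsilon_n\bigr]$ — the $j$-th term being the event that the first $j-1$ rounds fail and the $j$-th succeeds, charged $(j-1)$ failed rounds plus one successful round, with the retransmission limit capping the number of rounds — I would evaluate the two elementary sums $\sum_{j}P_1^{\,j-1}=(1-P_1^{N^{\max}_r+1})/(1-P_1)$ and $\sum_{j}(j-1)P_1^{\,j-1}$, the latter by differentiating the finite geometric series. Setting $\chi_n=\alpha/\tilde{S}$ and clearing a common factor $1-P_1$ between numerator and denominator, the first sum furnishes $f(\epsilon_n)=\epsilon_n(1-P_1^{N^{\max}_r+1})(1-P_1)$, while the second collapses to exactly $\bar{\epsilon}_n P_1\bigl[1-P_1^{N^{\max}_r}\bigl(1+N^{\max}_r(1-P_1)\bigr)\bigr]$; the prefactor $\alpha$ enters because the source serves its buffer only during the first sub-phase of each period.

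The main obstacle I anticipate is the bookkeeping of the truncated retransmission process: fixing the correct off-by-one in the cap (so the exponent is $N^{\max}_r+1$ rather than $N^{\max}_r$), cleanly separating the successful-round cost $\epsilon_n$ from the failed-round cost $\bar{\epsilon}_n$, and verifying that the finite sum $\sum(j-1)P_1^{\,j-1}$, after multiplication by the two $(1-P_1)$ weights, reduces to precisely the bracket $1-P_1^{N^{\max}_r}\bigl(1+N^{\max}_r(1-P_1)\bigr)$. The physical derivation of $\epsilon_n$ and $\bar{\epsilon}_n$ from the two-phase protocol, with the proper insertion of the spectrum-availability factors $a^a_{i,j}$, is the secondary delicate point.
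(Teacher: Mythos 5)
Your proposal is correct and follows essentially the same route as the paper's Appendix~\ref{a1}: the service time is modeled as a truncated-geometric number of rounds with per-round failure probability $P_1=P_{LD}P_{L1}$ and cost $\epsilon_n+\top\bar{\epsilon}_n$, the expectation $E\{\delta_n\}=\sum_{\top=0}^{N_r^{\max}}(1-P_1)P_1^{\top}(\epsilon_n+\top\bar{\epsilon}_n)$ is evaluated via the two finite geometric sums, and $\chi_n=\alpha/E\{\delta_n\}$; your derivation of $\epsilon_n$ by conditioning on the direct link with the access-probability factors $a^a_{i,j}$ matches \eqref{r-tau}. The only cosmetic difference is that the paper attributes $\epsilon_n$ to the first attempt and $\bar{\epsilon}_n$ (a channel-state average) to each retransmission, rather than to successful versus failed rounds, which leaves the arithmetic unchanged.
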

\begin{proof}[Proof]
The proof is given in Appendix~\ref{a1}.
\end{proof}
\begin{lemma}\label{relay-service}
For the relay-assisted transmission, the packet's service rate from the relay to the destination at channel state $n$ is \begin{equation}\small
\chi^{'}_n=\frac{(1-\alpha)(1-P_{L2})}{f(\epsilon^{'}_n)+\bar{\epsilon}^{'}_nP_{L2}\left [ 1-P_{L2}^{N^{\max}_r}\left(1+N^{\max}_r (1-P_{L2})\right) \right]},
\end{equation}
where $f(\epsilon^{'}_n)=\epsilon^{'}_n(1-P^{N^{\max}_r+1}_{L2})(1-P_{L2})$,  $\epsilon^{'}_n=\frac{L}{b^{l_{R,D}}_nR_{s}a^a_{R,D}}$ and $\bar{\epsilon}^{'}_n=\frac{L}{\bar{b}^{l_{R,D}}_nR_{s}a^a_{R,D}}$ represent the relay to the destination packet transmission time and average retransmission time for each retransmission, respectively.
\end{lemma}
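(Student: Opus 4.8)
The plan is to mirror the renewal-type argument used for Lemma~\ref{total-service} (proved in Appendix~\ref{a1}), now specialized to the single relay-to-destination hop $l_{R,D}$. The relay-to-destination transmission is an ordinary single-link ARQ process with its own AMC mode and retransmission limit, so its service-time analysis is structurally identical to the source analysis of Lemma~\ref{total-service}, under the substitutions $\alpha \mapsto 1-\alpha$ (the phase-2 time fraction in place of the phase-1 fraction), $P_1 \mapsto P_{L2}$ (the per-attempt packet error probability of $l_{R,D}$ in place of the first-phase failure probability), $\epsilon_n \mapsto \epsilon'_n$, and $\bar{\epsilon}_n \mapsto \bar{\epsilon}'_n$. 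The only change is that $\epsilon'_n = L/(b^{l_{R,D}}_n R_s a^a_{R,D})$ is a single-hop transmission time rather than the two-path combination that defines $\epsilon_n$ in \eqref{r-tau}; this makes the relay derivation strictly a specialization of, and easier than, the source one.

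First I would set up the per-packet retransmission model on $l_{R,D}$. A packet is first transmitted in time $\epsilon'_n$ using mode $n$, and every subsequent retransmission takes the average time $\bar{\epsilon}'_n$ (using the average modulation $\bar{b}^{l_{R,D}}_n$), up to the retransmission budget. Each attempt fails independently with probability $P_{L2}$, so the packet is delivered successfully on the $k$-th attempt with probability $P_{L2}^{\,k-1}(1-P_{L2})$ at a cost of $\epsilon'_n+(k-1)\bar{\epsilon}'_n$, while the residual mass $P_{L2}^{\,N^{\max}_r+1}$ lies on the drop event once the attempts are exhausted. Exactly as in Appendix~\ref{a1}, I would then form the expected transmission time accumulated by the packets that are eventually delivered,
\begin{equation}
\Theta_n=\sum_{k=1}^{N^{\max}_r+1}P_{L2}^{\,k-1}(1-P_{L2})\left[\epsilon'_n+(k-1)\bar{\epsilon}'_n\right],
\end{equation}
and define the phase-2 service rate as the time fraction $(1-\alpha)$ divided by this quantity, i.e.\ $\chi'_n=(1-\alpha)/\Theta_n$.

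The next step is to evaluate the two finite sums in $\Theta_n$. The geometric sum $\sum_k P_{L2}^{\,k-1}$ contributes the factor $(1-P_{L2}^{\,N^{\max}_r+1})$, which together with the leading $(1-P_{L2})$ yields $f(\epsilon'_n)=\epsilon'_n(1-P_{L2}^{\,N^{\max}_r+1})(1-P_{L2})$ after the $(1-P_{L2})$ normalization below. The arithmetic-geometric sum $\sum_k (k-1)P_{L2}^{\,k-1}$, obtained by differentiating the geometric series, collapses the truncated expression $1-(N^{\max}_r+1)P_{L2}^{\,N^{\max}_r}+N^{\max}_r P_{L2}^{\,N^{\max}_r+1}$ into the bracketed term $\bigl[1-P_{L2}^{\,N^{\max}_r}(1+N^{\max}_r(1-P_{L2}))\bigr]$ multiplying $\bar{\epsilon}'_n$. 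Since $(1-P_{L2})\Theta_n$ is then exactly the stated denominator, multiplying numerator and denominator of $\chi'_n=(1-\alpha)/\Theta_n$ by the common factor $(1-P_{L2})$ recovers the claimed closed form.

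I expect the main obstacle to be bookkeeping rather than any conceptual difficulty: one must keep the initial-transmission time $\epsilon'_n$ separate from the retransmission time $\bar{\epsilon}'_n$, assign the drop-event mass consistently with the success-weighted convention fixed in Appendix~\ref{a1}, and reduce the truncated arithmetic-geometric sum to the stated bracket. Because the relay hop avoids the two-path averaging present in $\epsilon_n$, once the substitutions $(\alpha,P_1,\epsilon_n,\bar{\epsilon}_n)\mapsto(1-\alpha,P_{L2},\epsilon'_n,\bar{\epsilon}'_n)$ are made the remaining algebra is a verbatim specialization of the Lemma~\ref{total-service} computation, so I would invoke that derivation and simply check that the parameter replacements reproduce $\chi'_n$.
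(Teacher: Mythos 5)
Your proposal is correct and follows essentially the same route as the paper: the paper's own proof of this lemma is simply ``similar to the proof of Lemma~\ref{total-service}'', and your renewal-type computation of the truncated success-weighted mean service time $\Theta_n$ followed by $\chi'_n=(1-\alpha)/\Theta_n$ is exactly the Appendix~\ref{a1} argument under the substitutions $(\alpha,P_1,\epsilon_n,\bar{\epsilon}_n)\mapsto(1-\alpha,P_{L2},\epsilon'_n,\bar{\epsilon}'_n)$. The evaluation of the geometric and arithmetic-geometric sums reproduces the stated denominator, so the derivation checks out.
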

\textcolor{black}{\begin{proof}
Similar to the proof of Lemma~\ref{total-service}.
\end{proof}}
Finally, for the source and the relay node, we obtain the sets of service rates as $\Psi=\left\{\chi_1, \cdots, \chi_N\right\}$ and $\Psi^{'}=\left\{\chi^{'}_1, \cdots, \chi^{'}_N\right\}$, respectively.

\subsubsection{Packet Dropping Rate}
Due to the bursty service, the packet can be dropped from the buffer due to buffer overflow. Now, the packet dropping rate at the source node $A$ and the relay node $R$ can be written as $P^A_d$ and $P^R_d$, respectively. Intuitively, when the number of packet arrivals is larger than the remaining space of the buffer, some packets will be dropped. At time $t$, assumed that the current service rate is $\chi^{'}_t$ at relay node $R$, and the service rate is $\chi^{'}_{t-1}$ at previous time. Then, the remaining space in the buffer of node $R$ is sized by $r^R_t=M-(\chi^{'}_{t-1}-\chi^{'}_t)$. As a result, the buffer for the relay node can accommodate at most $r^R_t$ arriving packets in the current time slot. Now, under the condition that the number of arriving packets $a^R_t$ for relay node is larger than $r^R_t$, there are $a^R_t-r^R_t$ packets need to be dropped.


\newtheorem{theorem}{Theorem}
\begin{theorem}\label{t1}
Under the condition that the average transmission rate from the source to the relay $\bar{\chi}>M-(S^R_{t-1}-\chi^{'}_t)$, the packet dropping rate at relay node $R$ is $P^R_d=\frac{E\left\{\Delta^R\right\}}{\bar{\chi}}$, where
\begin{equation}\small\nonumber
\begin{aligned}
E\left\{\Delta^R\right\}&=\sum_{s^{'}\in \mathbb{S}, \chi^{'}\in \Psi^{'}}\Delta^R\cdot\ss(\bar{\chi}|S^R_{t-1}=s^{'}, \chi^{'}_t=\chi^{'})\\
&=\sum_{ s^{'}\in \mathbb{S}, \chi^{'}\in \Psi^{'}}\left [(\bar{\chi}-(M-(S^R_{t-1}-\chi^{'}_t)))\times \pi^{'}_{s^{'},\chi^{'}}\right ]
\end{aligned}
\end{equation}
is the average number of dropped packets at relay node $R$ and $\pi_{s^{'},\chi^{'}}$ is the stationary distribution of the buffer state and the service rate state for the relay to the destination transmission system.
\end{theorem}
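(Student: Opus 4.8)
The plan is to model the relay buffer together with its service/channel state as a finite discrete-time Markov chain, prove that this chain admits a unique stationary distribution $\pi^{'}_{s^{'},\chi^{'}}$, and then obtain $E\{\Delta^R\}$ and hence $P^R_d$ by an ergodic (renewal--reward) averaging argument. First I would fix the joint state $(S^R_{t},\chi^{'}_{t})\in\mathbb{S}\times\Psi^{'}$, where $\mathbb{S}=\{0,1,\dots,M\}$ records the buffer occupancy and $\chi^{'}_{t}$ the relay-to-destination service state drawn from the set $\Psi^{'}$ of Lemma~\ref{relay-service}. The one-slot recursion is
\begin{equation}\nonumber
S^R_{t}=\min\bigl\{M,\;S^R_{t-1}-\chi^{'}_{t}+a^R_{t}\bigr\},
\end{equation}
so that the free room is $r^R_{t}=M-(S^R_{t-1}-\chi^{'}_{t})$ and a drop of size $\Delta^R=(a^R_{t}-r^R_{t})^{+}$ occurs exactly when the per-slot relay arrival count $a^R_{t}$ exceeds $r^R_{t}$, which is the event isolated in the hypothesis.

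The core step is to prove that this chain has a unique stationary distribution, which is the existence claim the whole analysis rests upon. Since $\mathbb{S}\times\Psi^{'}$ is finite, it suffices to verify irreducibility and aperiodicity. For irreducibility I would show that every buffer level communicates with every other: a run of slots with zero relay arrivals (which occur with positive probability) drains the queue toward $0$, while a run of heavy-arrival slots fills it toward $M$, and the service states all communicate because each channel state occurs with the positive probability $P_r(n)$ in every slot. Aperiodicity follows from a self-loop at the empty state $(0,\chi^{'})$, realizable with zero arrivals and positive probability. The finite-chain (Perron--Frobenius) theorem then yields existence and uniqueness of $\pi^{'}_{s^{'},\chi^{'}}$.

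With $\pi^{'}$ in hand I would compute the mean drop per slot by conditioning on the stationary state. In state $(s^{'},\chi^{'})$ the free room is $M-(s^{'}-\chi^{'})$ and the conditional drop is $(a^R_{t}-(M-(s^{'}-\chi^{'})))^{+}$; taking the mean-arrival value $E\{a^R_{t}\}=\bar{\chi}$ of the source-to-relay link reduces this to $\bar{\chi}-(M-(s^{'}-\chi^{'}))$ on the states where the hypothesis $\bar{\chi}>M-(s^{'}-\chi^{'})$ holds, and to $0$ elsewhere. Averaging against $\pi^{'}_{s^{'},\chi^{'}}$ over $\mathbb{S}\times\Psi^{'}$ reproduces exactly the claimed $E\{\Delta^R\}$. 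Finally, since the long-run packet-dropping rate is the fraction of arriving packets that overflow, the ergodic theorem for the chain identifies it with the ratio of expected drops to expected arrivals per slot, giving $P^R_d=E\{\Delta^R\}/\bar{\chi}$.

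I expect the main obstacle to be the stationarity argument: establishing irreducibility in the joint buffer--service space requires care, because departures are coupled to the time-varying service state $\chi^{'}_{t}$, so reachability of every buffer level must be argued jointly with the channel transitions rather than for the queue in isolation. A secondary subtlety is the justification for replacing the random arrival count $a^R_{t}$ by its mean $\bar{\chi}$ inside the positive-part operator; this is exact under a deterministic mean-arrival (fluid) treatment but is otherwise a mean-value step, and I would flag that, by Jensen's inequality applied to $(\cdot)^{+}$, it in fact yields a lower bound on the true expected overflow.
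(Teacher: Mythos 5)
Your proposal follows essentially the same route as the paper's proof: model the joint buffer--service state $(S^R_t,\chi^{'}_t)$ as a finite Markov chain, establish irreducibility to obtain the stationary distribution $\pi^{'}_{s^{'},\chi^{'}}$, and then compute the expected per-slot overflow by averaging $\bar{\chi}-(M-(S^R_{t-1}-\chi^{'}_t))$ against $\pi^{'}$ and dividing by the mean arrival rate $\bar{\chi}$. The only notable divergence is in the irreducibility step --- the paper's slow-fading model permits channel-state transitions only between adjacent states, so communication of the service states is argued via a path through neighbouring states rather than via every state occurring with probability $P_r(n)$ in each slot --- but this does not change the conclusion, and your remark that replacing the random arrival count by its mean $\bar{\chi}$ is a fluid/mean-value step is a fair reading of what the paper itself does in its queue recursion.
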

\begin{proof}[Proof]
The proof is given in Appendix~\ref{a2}.
\end{proof}

In order to derive the packet dropping rate $P^A_d$ at the source buffer, the stationary distribution $\pi$ of the buffer system should also be considered.
At the time $t$, the number of arrival packets at source is $a^A_t$. Considering that the number of packets dropped at source node $A$ is $\Delta^A_t$, the packet dropped rate $P^A_d$ can be computed as~\cite{2005-Liu-AMC}
\begin{equation}\label{droprate}
P^A_d\triangleq \lim_{T\rightarrow \infty}\frac{\sum_{t=1}^{T}\Delta^A_t}{\sum_{t=1}^{T}a^A_t}=\frac{E\left\{\Delta^A\right\}}{E\left\{a^A_t\right\}}=\frac{E\left\{\Delta^A\right\}}{\bar{\lambda}}.
\end{equation}
Based on Theorem~\ref{t1}, the average number of dropped packets $E\left \{ \Delta^A \right \}$ can be derived as in~\eqref{dropnum-relay}.
\begin{equation}\label{dropnum-relay}
\begin{aligned}
E\left \{ \Delta^A \right \}&=\sum_{ s\in \mathbb{S}, \chi\in \Psi}\Delta^A\cdot\ss(a_t=a, S^A_{t-1}=s, \chi_t=\chi)\\
&=\sum_{ s\in \mathbb{S}, \chi\in \Psi}\left [(a^A_t-(M-(S^A_{t-1}-\chi_t)))\times \pi_{s,\chi}\right ]
\end{aligned}
\end{equation}

With $P^A_d$ and $P^R_d$ available, we can get the average traffic rate and the system throughput. \textcolor{black}{By considering the influences of both the source-queueing and the relay-queueing, we will analyse the packet delay in the next subsection.}

\subsubsection{Network Throughput}
To analyze the performance of the network throughput, we need to consider how the packets are successfully transmitted to the destination from the cooperative network. During the packet transmission, the packet dropping rate from the source and relay queueing and packet violation from the channel with $N^{\max}_r$ retransmissions are influencing the successfully transmission. This means that the system average throughput is related to not only the packet dropping rate $P^A_d$ and $P^R_d$, but also the average overall packet error rate $P_0$. Then, for an average packet arrival rate of $\bar{\lambda}$, we can obtain the system average throughput $\mathcal{R}$ as
\begin{equation}\label{at}
\mathcal{R}=\bar{\lambda}(1-P^A_d)(1-P^R_d)(1-P_0^{N^{\max}_r}).
\end{equation}

\subsection{Delay Analysis}
After studying the network throughput in the previous subsection, we now focus on the average packet delay for the relay-assisted network in details, which considers the spectrum access probability. 

By considering the stationary distribution of the source-buffer system, the average queue length of source node $A$ at channel state $n$ can be obtained as
\begin{equation}\label{average-queue}
\bar{Q}^n_{(q,A)}=\sum_{s\in \mathbb{S}}\sum_{\chi=\chi_n}\pi_{(s,\chi)}\cdot s.
\end{equation}
At the same time, by considering the stationary distribution $\pi_{(s^{'},\chi^{'})}$ of the relay-buffer system, we can get the average queue length of the relay node $R$ at channel state $n$ as
\begin{equation}\label{average-queue}
\bar{Q}^n_{(q,R)}=\sum_{s^{'}\in \mathbb{S},\chi^{'}=\chi^{'}_n}\pi_{(s^{'},\chi^{'})}\cdot s^{'}.
\end{equation}

With regard to the M/G/1 queue, it is known that the average waiting time of a packet consists of queueing time and service time. According to Little's formula~\cite{2008-Gross-Queueing}, the queueing delay is $\bar{\mathcal{D}}_{q}=\frac{\bar{Q}_{q}}{r}$.
Then, we can get the average packet delay $\bar{\mathcal{D}}_n$ for channel state $n$ given by~\eqref{delay-relay}, where $\bar{\chi}=\sum_{n=1}^{N} P_r(n)\chi_n$.
\begin{figure*}[!t]
\normalsize
\begin{equation}\small\label{delay-relay}
\bar{\mathcal{D}}_n=\frac{\bar{Q}^n_{(q,A)}}{\bar{\lambda}}+\frac{\bar{Q}^n_{(q,R)}}{\bar{\chi}}+E\left\{\delta_n\right\}=\frac{\bar{Q}^n_{(q,A)}}{r}+\frac{\bar{Q}^n_{(q,R)}}{\bar{\chi}}+\frac{f(\epsilon_n)+\bar{\epsilon}_nP_1\left [ 1-P_1^{N^{\max}_r}\left(1+N^{\max}_r (1-P_1)\right) \right]}{1-P_1}.
\end{equation}
\hrulefill
\vspace*{10pt}
\end{figure*}
In order to simplify the calculation, the same channel state can be considered for the inter-node channel. Then, we can obtain the average packet delay over the relay-assisted transmission system as
\begin{equation}\label{adelay}
\mathcal{D}_{(T)}=\sum_{n=1}^{N}P_r(n)\bar{\mathcal{D}}_n.
\end{equation}
\begin{lemma}\label{delay-snr}
For the cooperative transmission systems, the average delay $\mathcal{D}_{(T)}$ is a decreasing function of the average received SNR $\bar{\mathcal{S}}$.
\end{lemma}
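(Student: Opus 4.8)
The plan is to trace the dependence on $\bar{\mathcal{S}}$ through every quantity that feeds into $\mathcal{D}_{(T)}$ and to show each is monotone in the direction that shrinks delay. The key starting observation is that the per-frame SNR density \eqref{pdf} is a \emph{scale family} in $\bar{\mathcal{S}}$: writing $\mathcal{S}=(\bar{\mathcal{S}}/m)G$ with $G\sim\mathrm{Gamma}(m,1)$, a larger $\bar{\mathcal{S}}$ yields first-order stochastic dominance (FSD) of the instantaneous SNR, and this single fact drives everything else. Since $PER_n=\alpha_n\exp(-g_n\mathcal{S})$ is non-increasing in $\mathcal{S}$, FSD makes each averaged link error rate $P_{LD}$, $P_{L1}$, $P_{L2}$ non-increasing in $\bar{\mathcal{S}}$, and hence the composite probabilities $P_1=P_{LD}P_{L1}$ and $P_2$ non-increasing as well. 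At the same time the probability mass $P_r(n)$ shifts toward higher states $n$ (higher modulation $b_n$), which shrinks the per-attempt transmission times $\epsilon_n$, $\bar{\epsilon}_n$, $\epsilon'_n$ appearing in \eqref{r-tau}.

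Next, invoking Lemma~\ref{total-service} and Lemma~\ref{relay-service}, I would argue that the service rates $\chi_n$ and $\chi'_n$ are non-decreasing in $\bar{\mathcal{S}}$; equivalently, the mean service time $E\{\delta_n\}=[f(\epsilon_n)+\bar{\epsilon}_nP_1(\cdots)]/(1-P_1)$ is non-increasing. For this I would verify directly that this expression is increasing in $P_1$ and in the transmission times $\epsilon_n,\bar{\epsilon}_n$; since all three were shown to decrease with $\bar{\mathcal{S}}$ in the previous step, $E\{\delta_n\}$ decreases and $\chi_n=\alpha(1-P_1)/[\cdots]$ rises. The same computation handles $\chi'_n$, so the aggregate $\bar{\chi}=\sum_{n=1}^{N}P_r(n)\chi_n$ is non-decreasing in $\bar{\mathcal{S}}$.

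The two queueing terms are the delicate part. Because a higher service rate (a stochastically smaller service time) produces a stochastically smaller stationary backlog in a finite-buffer single-server queue, the mean queue lengths $\bar{Q}^n_{(q,A)}$ and $\bar{Q}^n_{(q,R)}$ computed from the stationary distribution $\pi$ are non-increasing in $\bar{\mathcal{S}}$. I would establish this via a sample-path coupling of two copies of the queue driven by the \emph{same} arrival stream but with the faster service law, so that the faster-served queue is dominated state-by-state and hence in the stationary limit. This is the step I expect to be the main obstacle, since $\pi$ is defined only implicitly through the balance equations of the appendix, and its monotone dependence on the service-rate set $\Psi$ (resp.\ $\Psi'$) must be \emph{argued} rather than read off a closed form. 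Granting the coupling, the source term $\bar{Q}^n_{(q,A)}/r$ decreases (its numerator decreases and $r=\bar{\lambda}$ is fixed), and the relay term $\bar{Q}^n_{(q,R)}/\bar{\chi}$ decreases because its numerator decreases while $\bar{\chi}$ increases.

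Finally I would assemble the pieces. Each of the three summands of $\bar{\mathcal{D}}_n$ in \eqref{delay-relay} is non-increasing in $\bar{\mathcal{S}}$, so every $\bar{\mathcal{D}}_n$ is as well; and because the weights $P_r(n)$ simultaneously shift toward the lower-delay high-$n$ states, a concluding FSD argument applied to the outer average $\mathcal{D}_{(T)}=\sum_{n=1}^{N}P_r(n)\bar{\mathcal{D}}_n$ shows that the average delay is itself non-increasing in $\bar{\mathcal{S}}$, which is the claim. The only part that goes beyond routine bookkeeping is the monotonicity of the implicitly-defined stationary distribution, so that is where I would concentrate the rigorous effort, treating the error-rate and service-rate monotonicities as short direct calculations.
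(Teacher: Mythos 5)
Your plan is a rigorization of essentially the same term-by-term argument the paper uses, but the paper's own ``proof'' is only two sentences: it asserts that increasing $\bar{\mathcal{S}}$ raises the probability of selecting higher transmission rates, declares that each $\bar{\mathcal{D}}_n$ in \eqref{delay-relay} is therefore decreasing, and concludes. None of the machinery you propose --- the scale-family/stochastic-dominance argument for \eqref{pdf}, the monotonicity of the averaged error rates, the monotonicity of $E\{\delta_n\}$ in $(P_1,\epsilon_n,\bar{\epsilon}_n)$, and above all the coupling argument for the stationary queue lengths --- appears in the paper. You have correctly identified the step the paper silently skips: the mean queue lengths $\bar{Q}^n_{(q,A)}$ and $\bar{Q}^n_{(q,R)}$ are defined through the implicit stationary distribution $\pi$ of the Markov chain in Appendix~B, and their dependence on $\bar{\mathcal{S}}$ is exactly what needs an argument. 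Your Lindley-recursion coupling does work for the source buffer, since $S_t=\min\{M,\max\{0,S_{t-1}-\chi_t\}+a_t\}$ is monotone non-decreasing in $S_{t-1}$ and non-increasing in $\chi_t$ for a fixed exogenous arrival sequence $a_t$.

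There is, however, a genuine hole in your treatment of the relay term that neither you nor the paper addresses. The relay buffer's arrival process is the source's departure process: in Theorem~\ref{t1} and Appendix~B the relay's arrivals are modeled with rate $\bar{\chi}=\sum_n P_r(n)\chi_n$, which itself \emph{increases} with $\bar{\mathcal{S}}$. Your coupling is stated for ``two copies of the queue driven by the same arrival stream,'' which is false at the relay: when you raise $\bar{\mathcal{S}}$ you simultaneously speed up the relay's service \emph{and} its arrivals, so the stationary backlog $\bar{Q}^n_{(q,R)}$ can move either way, and the relay queueing delay $\bar{Q}^n_{(q,R)}/\bar{\chi}$ has both numerator and denominator increasing or decreasing in competition. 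Whether it is monotone depends on how the relay utilization $\bar{\chi}/\bar{\chi}'$ behaves, which in turn depends on $\alpha$ and on the relative error rates of the two hops; it is not a consequence of first-order stochastic dominance alone. To close this you would need either a two-dimensional coupling of the tandem (source, relay) chain showing the relay's \emph{workload in time units} decreases, or an explicit bound showing the utilization is non-increasing under the paper's assumption $\bar{\mathcal{S}}_R=\bar{\mathcal{S}}$. A secondary, smaller gap: the monotonicity of $P_{LD}$, $P_{L1}$, $P_{L2}$ in $\bar{\mathcal{S}}$ does not follow from FSD plus ``$PER_n$ decreasing in $\mathcal{S}$'' alone, because these are ratios of $b_n$-weighted sums over states and shifting mass to higher states also shifts weight onto higher-order modulations whose per-state $PER_n$ need not be smaller; this requires the AMC thresholds $\mathcal{S}_n$ to be chosen so that the per-state error rate is non-increasing in $n$, which should be stated as a hypothesis.
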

\textcolor{black}{\begin{proof}
Lemma~\ref{delay-snr} is intuitively correct. Notice that when increasing $\bar{\mathcal{S}}$, the probabilities of choosing larger transmission rate for both the source node and the relay node are increasing. According to \eqref{delay-relay}, $\bar{\mathcal{D}}_n$ is decreasing for $\forall n$. As a result, $\mathcal{D}_{(T)}$ is decreasing. Particularly, the delay can be formulated as $\mathcal{D}_{(T)}=\phi (\bar{\mathcal{S}})$, and $\phi (\bar{\mathcal{S}})$ is a decreasing function of $\bar{\mathcal{S}}$. 
\end{proof}}

\subsection{Power Consumption}
Let $\mathcal{P}_n$ denote the transmission power for the source node at channel state $n$, and the BER $\vartheta$ of the packet can be expressed as a function of the post-adaptation received SNR $\mathcal{S}\mathcal{P}_n/\bar{e}$~\cite{2009-Jalil-AMC}
\begin{equation}\label{ber}
\vartheta\approx 0.2\mbox{exp}(-\frac{1.5}{2^{b_n}-1}\frac{\mathcal{P}_n}{\bar{e}}\mathcal{S}),
\end{equation}
where $b_n$ is the modulation size for AMC mode $n$.

Using~\eqref{ber}, we obtain the transmission power $\mathcal{P}_n$ as a function of the BER $\vartheta$ and $\mathcal{S}$ as follows:
\begin{equation}\label{avp}
\mathcal{P}_n=\frac{\bar{e}(2^{b_n}-1)}{1.5 \mathcal{S}}\mbox{ln}\frac{0.2}{\vartheta}.
\end{equation}
Thus, the average transmission powers of source node and relay node at channel state $n$ can be respectively obtained as
\begin{equation}\label{modep}
\begin{aligned}
\bar{\mathcal{P}}_n=\int_{\mathcal{S}_n}^{\mathcal{S}_{n+1} }\frac{\bar{e}(2^{b_n}-1)}{1.5 \mathcal{S}}\mbox{ln}\frac{0.2}{\vartheta}f_{\bar{\mathcal{S}}}(\mathcal{S})d\mathcal{S},\\
\text{and}\quad \bar{\mathcal{P}}^R_n=\int_{\mathcal{S}_n}^{\mathcal{S}_{n+1} }\frac{\bar{e}(2^{b_n}-1)}{1.5 \mathcal{S}}\mbox{ln}\frac{0.2}{\vartheta}f_{\bar{\mathcal{S}}_R}(\mathcal{S})d\mathcal{S},
\end{aligned}
\end{equation}
where BER can be obtained by $PER_n$.
Based on \eqref{prn} and \eqref{modep}, the average transmit power $\mathcal{P}$ of source node can be shown as
\begin{equation}
\mathcal{P}=\sum_{n=1}^{N}\bar{\mathcal{P}}_nP_r(n),
\end{equation}
Similarly, we can obtain the average transmit power $\mathcal{P}_R$.

\subsection{Energy Efficiency of Relay-Assisted Transmission}
Note that the transmit power for the relay node is $\mathcal{P}_R$. Furthermore, we assume that the power consumption for the node at idle state is constant $\mathcal{P}_0$. Then, the total energy consumption can be obtained by the following lemma.

\begin{lemma}\label{total-power}
For the secondary relay-assisted transmission networks, the total energy consumption with opportunistic spectrum access for each transmission period is
\begin{equation}\label{power}
\begin{aligned}
\zeta=(a^a_{A,R}\mathcal{P}+a^u_{A,R}\mathcal{P}_0)T_1+(a^a_{R,D}\mathcal{P}_R+a^u_{R,D}\mathcal{P}_0)T_2.
\end{aligned}
\end{equation}
\end{lemma}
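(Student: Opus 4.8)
The plan is to account for the energy consumed over the two phases of a single relay-assisted transmission period separately, and within each phase to average the instantaneous power over the two spectrum-occupancy states of the corresponding link using the stationary probabilities established in \eqref{available} and \eqref{unavailable}.

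First I would split the period $T=T_1+T_2$ into the source-to-relay phase on link $l_{A,R}$ of duration $T_1$ and the relay-to-destination phase on link $l_{R,D}$ of duration $T_2$. Under the DF protocol these two phases are sequential and use disjoint links, so their energies are additive and can be treated independently. Within a given phase, opportunistic spectrum access dictates that the active node transmits only when its link is in the available (idle) state and otherwise consumes the constant idle power $\mathcal{P}_0$; by the medium-access model the occupancy of each link is governed by the two-state continuous-time Markov chain whose stationary available and unavailable probabilities are exactly $a^a_{i,j}$ and $a^u_{i,j}$.

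Next I would apply the law of total expectation (equivalently, ergodicity of the occupancy chain together with the renewal-reward theorem) to the instantaneous power in each phase. In the source-to-relay phase the power equals the transmit power $\mathcal{P}$ with probability $a^a_{A,R}$ and the idle power $\mathcal{P}_0$ with probability $a^u_{A,R}$, so the time-averaged power is $a^a_{A,R}\mathcal{P}+a^u_{A,R}\mathcal{P}_0$; multiplying by the phase length $T_1$ yields the energy of this phase. The relay-to-destination phase is handled identically with transmit power $\mathcal{P}_R$ and stationary probabilities $a^a_{R,D}$ and $a^u_{R,D}$, giving energy $(a^a_{R,D}\mathcal{P}_R+a^u_{R,D}\mathcal{P}_0)T_2$. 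Summing the two contributions produces the claimed expression for $\zeta$.

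The only delicate point, and the step I would treat most carefully, is justifying that the \emph{time}-averaged power over each finite phase coincides with the \emph{stationary} average power, i.e. that the spectrum-occupancy chain has effectively reached its stationary regime over the relevant averaging horizon (either because each phase spans many occupancy transitions, or because the energy is understood as an expectation over many frames). Once this stationarity is invoked to legitimize using $a^a_{i,j}$ and $a^u_{i,j}$ as mixing weights, the remainder reduces to the elementary two-term convex combination above, so the bulk of the argument is conceptual rather than computational.
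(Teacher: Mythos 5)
Your proposal is correct and follows essentially the same argument as the paper's own proof: decompose the period into the two DF phases, weight the transmit power by $a^a_{i,j}$ and the idle power by $a^u_{i,j}$ on each link, multiply by $T_1$ and $T_2$, and add. The only difference is that you explicitly flag the ergodicity/stationarity justification for using the stationary occupancy probabilities as mixing weights, a point the paper's proof takes for granted.
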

\textcolor{black}{\begin{proof}
The proof is given in Appendix~\ref{a3}.
\end{proof}}

Based on \eqref{at} and lemma~\ref{total-power}, the energy efficiency is given by
\begin{multline}\small\label{energy-efficiency-relay}
\eta_{ee}=\frac{T\bar{\lambda}(1-P^A_d)(1-P^R_d)(1-P_0^{N^{\max}_r})}{\zeta}=\\
\frac{T\bar{\lambda}(1-P^A_d)(1-P^R_d)(1-P_0^{N^{\max}_r})}{(a^a_{A,R}\mathcal{P}+a^u_{A,R}\mathcal{P}_0)T_1+(a^a_{R,D}\mathcal{P}_R+a^u_{R,D}\mathcal{P}_0)T_2}.
\end{multline}
We aim to jointly optimize the average SNR $\bar{\mathcal{S}}$, $\bar{\mathcal{S}}_R$ and the time allocations for the cooperative transmission based on the closed-form expression of the energy efficiency. Note that the time allocation ratio $\alpha$ may take any number in the range of $(0,1)$. The optimization problem can be specified as follows:
\begin{equation}\label{eeproblem}
\begin{aligned}
&\max_{\alpha, \bar{\mathcal{S}},\bar{\mathcal{S}}_R}\quad \eta_{ee}\\
&\mbox{s.t.} \begin{array}[t]{rcl}
    0 &<&\alpha\quad< 1,\\
    \bar{\mathcal{S}}_{\min} &<&\bar{\mathcal{S}},\bar{\mathcal{S}}_R< \bar{\mathcal{S}}_{\max}.
       \end{array}
\end{aligned}
\end{equation}

For any given time allocation ratio $\alpha\in (0,1)$, we found that we are able to express the corresponding energy efficient transmit power $\mathcal{P}$ and $\mathcal{P}_R$ in terms of the time allocation ratio $\alpha$ with closed-form expressions, which are respectively denoted as $\mathcal{P}^{\ast}(\alpha)$ and $\mathcal{P}_R^{\ast}(\alpha)$. Since the variables $\alpha$, $\bar{\mathcal{S}}$ and $\bar{\mathcal{S}}_R$ are coupled to the packet dropping rate and the transmit power, it's also too complex to solve \eqref{eeproblem}. Then, we provide a suboptimal solution for problem~\eqref{eeproblem} in two stages. In the first stage, we fix $\alpha$ for some value in the interval $(0,1)$. To this end, a numerical search algorithm based on a Golden Section search method~\cite{2010-Eligius-Nonlinear} can be utilized to get the energy efficient solution $\bar{\mathcal{S}}^{opt}$ and $\bar{\mathcal{S}}_R^{opt}$. In the second stage, we still apply numerical search of the single variable $\alpha$ over the interval $(0,1)$ to obtain the energy efficient time allocation ratio $\alpha^{\ast}$ that maximizes the energy efficiency $\eta_{ee}$. To summarize, we illustrate the numerical search based procedure in the following Algorithm~\ref{alg:framwork}.
\begin{algorithm}\small
\caption{Energy efficient time allocation ratio and power allocation for the proposed relay assisted transmission.}
\label{alg:framwork}     
\begin{algorithmic}[1]
\STATE For $\alpha$=0:1\\
$\textbf{Step 1)}$ Initialize the energy efficient threshold value $\mathcal{S}_n=\mathcal{S}^{\ast}_n$~(see \cite{Wangkl-2015-EE}), $\forall n=1,2,\cdots,N$.\\
$\textbf{Step 2)}$ Repeat:\\
calculate $\eta_{ee}$ using \eqref{energy-efficiency-relay};\\
update $\bar{\mathcal{S}}$ and $\bar{\mathcal{S}}_R$ using the Golden Section search method;\\
$\textbf{Step 3)}$ Until $\eta_{ee}$ converge.
\STATE End.
\STATE Energy efficient time allocation ratio:
$\alpha^{\ast}$=$\arg\max \eta_{ee}$, and power allocation for the source node and relay node: $\mathcal{P}^{\ast}$=$\left\{\mathcal{P}\right\}_{\bar{\mathcal{S}}=\bar{\mathcal{S}}^{\mathrm{opt}},\bar{\mathcal{S}}_R=\bar{\mathcal{S}}_R^{\mathrm{opt}}}$.
\end{algorithmic}
\end{algorithm}

\subsection{Energy Efficiency of Direct Link Transmission}
Based on \eqref{avp}, the average transmit power consumption of direct link at channel state $n$ is calculated as
\begin{equation}\small
\bar{\mathcal{P}}_n^{S,D}=\int_{\mathcal{S}_n}^{\mathcal{S}_{n+1} }\frac{\bar{e}(2^{b_{S,D}}-1)}{1.5 \mathcal{S}_{S,D}}\mbox{ln}\frac{0.2}{\vartheta_{S,D}}f_{\bar{\mathcal{S}}_{S,D}}(\mathcal{S})d\mathcal{S},
\end{equation}
where $b_{S,D}$, $\bar{\mathcal{S}}_{S,D}$ and $\vartheta_{S,D}$ denote the modulation size, the average SNR and the BER for the direct source-to-destination transmission, respectively.
Then, the average transmit power for direct link transmission can be obtained as
\begin{equation}
\mathcal{P}_{S,D}=\sum_{n=1}^{N}\bar{\mathcal{P}}_n^{S,D}P^{S,D}_r(n).\label{dpower}
\end{equation}
According to \eqref{dpower} and lemma~\ref{total-power}, the total energy consumption with direct transmission period $T$ can be written as
\begin{equation}\small\label{power-direct}
\begin{aligned}
\zeta=(a^a_{A,D}\mathcal{P}_{S,D}+a^u_{A,D}\mathcal{P}_0)T
\end{aligned}
\end{equation}
where $\mathcal{P}_0$ is power consumption for the source-node being at idle state.

Based on the packet dropping rate $P_d$ of the source node and \eqref{at}, we can get the throughput of the direct transmission system as
\begin{equation}\label{at-direct}
\mathcal{R}=\bar{\lambda}(1-P_d)(1-(1-\bar{P}_{LD})^{N^{\max}_r}).
\end{equation}

Based on \eqref{at-direct} and~\eqref{power-direct}, the energy efficiency can be denoted as
\begin{equation}\label{energy-efficiency}
\begin{aligned}
\eta_{ee}&=\frac{T\bar{\lambda}(1-P_d)(1-(1-\bar{P}_{LD})^{N^{\max}_r})}{\zeta}\\
&=\frac{\bar{\lambda}(1-P_d)(1-(1-\bar{P}_{LD})^{N^{\max}_r})}{a^a_{A,D}\mathcal{P}_{S,D}+a^u_{A,D}\mathcal{P}_0}.
\end{aligned}
\end{equation}
Similarly, we aim to optimize the average SNR $\bar{\mathcal{S}}_{S,D}$ to achieve the energy-efficient direct link transmission based on the closed-form expression of the energy efficiency, which can also be solved with the Algorithm~\ref{alg:framwork}.

\section{Energy Efficient Scheduling}
In this section, we investigate the relationship between the transmission mode selection and packet delay demand, where the transmission mode includes the direct link transmission and relay-assisted cooperative transmission. Then we determine the energy efficient modulation scheduling strategy with the effects of the delay requirements. 

For scheduling heterogeneous flows, the evaluation of end-to-end delays involves in avionics, multimedia~(video and audio) and best-effort data virtual links~\cite{2012-Hua-Delay}. Their performance depends upon scheduling policy, and different scheduling policies between direct transmission and relay-assisted transmission have different results of energy efficiencies. Therefore, we should determine the energy efficient scheduling policy for different flows having different delay requirements.

\begin{theorem}\label{t2}
In the opportunistic spectrum access cooperative network, to ensure energy efficient packet transmission, there exists a delay threshold $\mathcal{D}^{\star}=\phi(\bar{\mathcal{S}}^{\star})$, which determines the selection between two transmission modes. If the access probabilities satisfy $a^a_{A,D}>\frac{a^a_{A,R}+a^a_{R,D}\cdot\alpha}{1+\alpha}$, while the delay requirement satisfies $\mathcal{D}_0<\mathcal{D}^{\star}$, the packet is transmitted only over direct link $l_{A,D}$; While the delay requirement satisfies $\mathcal{D}_0>\mathcal{D}^{\star}$, the packet is transmitted over the cooperative links $l_{A,R}$ and $l_{R,D}$. On the other hand, if $a^a_{A,D}<\frac{a^a_{A,R}+a^a_{R,D}\cdot\alpha}{1+\alpha}$, while the delay requirement satisfies $\mathcal{D}_0<\mathcal{D}^{\star}$, the packet is transmitted over the cooperative links $l_{A,R}$ and $l_{R,D}$; While the delay requirement satisfies $\mathcal{D}_0>\mathcal{D}^{\star}$, the packet is transmitted only over direct link $l_{A,D}$.
\end{theorem}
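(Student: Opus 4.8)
The plan is to compare the two energy-efficiency expressions derived earlier --- equation~\eqref{energy-efficiency-relay} for the relay-assisted mode and equation~\eqref{energy-efficiency} for the direct mode --- as functions of the delay budget $\mathcal{D}_0$, and to show that their difference changes sign exactly once. By Lemma~\ref{delay-snr}, the total delay $\mathcal{D}_{(T)}=\phi(\bar{\mathcal{S}})$ is a strictly decreasing function of the average received SNR, so the delay constraint $\mathcal{D}_{(T)}\leq\mathcal{D}_0$ is equivalent to a \emph{lower} bound on the admissible SNR, namely $\bar{\mathcal{S}}\geq\phi^{-1}(\mathcal{D}_0)$. First I would invert this relation for both transmission modes: a larger delay budget permits a smaller operating SNR, and by~\eqref{avp} a smaller SNR requires larger transmit power to maintain the target BER. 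Thus each mode's energy efficiency is itself a monotone function of $\mathcal{D}_0$, and I would make this dependence explicit by substituting the binding constraint $\bar{\mathcal{S}}=\phi^{-1}(\mathcal{D}_0)$ into $\eta_{ee}$ for each mode.

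Next I would form the gap function $g(\mathcal{D}_0)\triangleq\eta_{ee}^{(r)}(\mathcal{D}_0)-\eta_{ee}^{(d)}(\mathcal{D}_0)$ and argue that it is continuous and strictly monotone in $\mathcal{D}_0$ over the feasible interval, so that it has a unique zero, which I would define to be $\mathcal{D}^{\star}=\phi(\bar{\mathcal{S}}^{\star})$. The sign of $g$ on either side of $\mathcal{D}^{\star}$ then dictates which mode is preferred: when $g>0$ the relay-assisted mode wins, and when $g<0$ the direct mode wins. The crux of the argument is to show that the \emph{direction} of this monotonicity, and hence which mode dominates for small versus large $\mathcal{D}_0$, is governed precisely by the comparison between $a^a_{A,D}$ and $\frac{a^a_{A,R}+a^a_{R,D}\cdot\alpha}{1+\alpha}$. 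Intuitively, the quantity $\frac{a^a_{A,R}+a^a_{R,D}\cdot\alpha}{1+\alpha}$ is the time-allocation-weighted effective spectrum-availability of the two-hop relay path, which appears through the transmission-time expression~\eqref{r-tau} and the power expressions~\eqref{power} and~\eqref{power-direct}; comparing it to the direct-link availability $a^a_{A,D}$ decides whether the relay path is more or less spectrum-efficient at the boundary SNR.

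I expect the main obstacle to be establishing the strict monotonicity of $g$ and pinning down the sign crossing rigorously, because $\eta_{ee}$ in both modes depends on $\bar{\mathcal{S}}$ through several coupled terms --- the packet dropping rates $P^A_d$, $P^R_d$ via the stationary distributions, the compound error probability $P_0$, and the transmit power --- so the dependence is not given in elementary closed form. To manage this I would evaluate the two efficiencies at the common operating point $\bar{\mathcal{S}}=\bar{\mathcal{S}}^{\star}$ where $\mathcal{D}^{\star}=\phi(\bar{\mathcal{S}}^{\star})$ holds for both modes, reduce the comparison at that point to the ratio of the power-consumption denominators in~\eqref{energy-efficiency-relay} and~\eqref{energy-efficiency}, and show that at this point the ordering reduces to the stated inequality on the access probabilities after cancelling the common throughput factors. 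Finally I would invoke the monotonicity of each $\eta_{ee}$ in $\mathcal{D}_0$ (inherited from Lemma~\ref{delay-snr} and~\eqref{avp}) to propagate the ordering to the full ranges $\mathcal{D}_0<\mathcal{D}^{\star}$ and $\mathcal{D}_0>\mathcal{D}^{\star}$, yielding the four cases in the statement.
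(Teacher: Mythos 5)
Your high-level plan---define a gap function $g(\mathcal{D}_0)=\eta_{ee}^{(r)}(\mathcal{D}_0)-\eta_{ee}^{(d)}(\mathcal{D}_0)$, show it crosses zero once, and read the four cases off its sign---has the right shape, but the two steps that would make it work are missing, and the paper argues quite differently. The paper never attempts to prove monotonicity of the gap; it evaluates both efficiencies in the two extreme regimes. As $\mathcal{D}_0\to 0$, Lemma~\ref{delay-snr} forces $\bar{\mathcal{S}},\bar{\mathcal{S}}_R\to\infty$, so $P_d,P_0\to 0$, the throughput saturates at $\bar{\lambda}$ for both modes, and the denominators of \eqref{energy-efficiency-relay} and \eqref{energy-efficiency} reduce to \emph{transmit} power weighted by the availability probabilities $a^a_{i,j}$; the comparison becomes a one-line inequality between $a^a_{A,D}$ and an $\alpha$-weighted combination of the two-hop availabilities. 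As $\mathcal{D}_0\to\infty$, $\bar{\mathcal{S}}\to 0$ and the denominators are instead dominated by \emph{idle} power weighted by the unavailability probabilities $a^u_{i,j}=1-a^a_{i,j}$, so the very same condition on the $a^a$'s yields the opposite ordering. This switch of the dominant power term---from availability-weighted transmit power to unavailability-weighted idle power---is what makes the preference flip and hence guarantees a crossing; your proposal never identifies this mechanism, and without it the claim that the sign of the monotonicity is ``governed precisely'' by the stated inequality is an unsupported assertion.

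Concretely, two steps in your plan would fail. First, you concede that strict monotonicity of $g$ is the crux and offer no route to it; without it, or a substitute such as the endpoint computation above, you obtain neither existence nor uniqueness of $\mathcal{D}^{\star}$. Second, your fallback of evaluating both efficiencies at the common operating point $\bar{\mathcal{S}}=\bar{\mathcal{S}}^{\star}$ and ``reducing the comparison to the ratio of the power-consumption denominators'' is circular: $\mathcal{D}^{\star}$ is defined as the point where the two efficiencies coincide, so no ordering can be extracted there, and a single denominator comparison at one SNR cannot produce the sign reversal the theorem asserts on the two sides of the threshold. To repair the argument, replace both steps by the two limiting computations: show that $\lim_{\mathcal{D}_0\to 0}g(\mathcal{D}_0)$ and $\lim_{\mathcal{D}_0\to\infty}g(\mathcal{D}_0)$ have opposite signs exactly when the stated condition on $a^a_{A,D}$ holds, then invoke continuity of $g$ to obtain the threshold (uniqueness of the crossing is, admittedly, not established by the paper's proof either).
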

\begin{proof}
The proof is given in Appendix~\ref{a4}.
\end{proof}

\newtheorem{proposition}{Proposition}
\begin{proposition}(Adopted from Theorem~\ref{t2})\label{l5}
The energy efficient modulation scheduling policy is $\omega^{opt}=\omega(\mathcal{S}^{\ast}_n,\bar{\mathcal{S}}^{opt}(\mathcal{D}_0))$ when $\mathcal{D}_0>\mathcal{D}^{\star}$. On the other hand, the energy efficient modulation scheduling policies are $\omega^{opt}=\omega(\mathcal{S}^{\ast}_n,\bar{\mathcal{S}}^{opt}(\mathcal{D}_0))$ and $\omega^{opt}=\omega(\mathcal{S}^{\ast}_n,\bar{\mathcal{S}}^{opt}_R(\mathcal{D}_0))$ for the source node and relay node when $\mathcal{D}_0<\mathcal{D}^{\star}$.
\end{proposition}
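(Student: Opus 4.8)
The plan is to read the statement off directly from the mode-selection rule of Theorem~\ref{t2} together with the energy-efficiency analysis of Section~III, so that essentially no new machinery is required; this is what the label ``(Adopted from Theorem~\ref{t2})'' signals. First I would invoke Theorem~\ref{t2} to fix, for a given delay budget $\mathcal{D}_0$ and the prevailing access probabilities, which transmission mode is energy-efficient: either the direct link $l_{A,D}$ or the cooperative pair $l_{A,R},l_{R,D}$. The content of the proposition is then a bookkeeping statement about how many AMC rate-adaptation policies the selected mode instantiates. In the direct-transmission regime only the source node transmits, so a single policy $\omega$ suffices; in the relay-assisted regime both the source-to-relay and relay-to-destination links carry AMC traffic, so two policies are needed, one per hop.

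Next I would pin down the arguments of each $\omega$. Recall that a rate-adaptation policy is fully specified by the SNR boundary points $\{\mathcal{S}_n\}$ and the average SNR $\bar{\mathcal{S}}$. For the thresholds I would appeal to the energy-efficient values $\mathcal{S}^{\ast}_n$ already fixed in Step~1 of Algorithm~\ref{alg:framwork} (taken from \cite{Wangkl-2015-EE}), which do not depend on the delay budget. For the average SNR I would use Lemma~\ref{delay-snr}: since $\mathcal{D}_{(T)}=\phi(\bar{\mathcal{S}})$ is strictly decreasing, $\phi$ is invertible, and the delay constraint $\mathcal{D}_{(T)}\le \mathcal{D}_0$ is equivalent to $\bar{\mathcal{S}}\ge \phi^{-1}(\mathcal{D}_0)$. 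I would then argue that at the energy-efficiency optimum the delay constraint is active, so that the energy-efficient average SNR is exactly $\bar{\mathcal{S}}^{opt}(\mathcal{D}_0)=\phi^{-1}(\mathcal{D}_0)$; the corresponding policy is therefore $\omega(\mathcal{S}^{\ast}_n,\bar{\mathcal{S}}^{opt}(\mathcal{D}_0))$. Repeating the argument on the relay-to-destination hop, whose delay contribution is governed by $\bar{\mathcal{S}}_R$, yields the second policy $\omega(\mathcal{S}^{\ast}_n,\bar{\mathcal{S}}^{opt}_R(\mathcal{D}_0))$ in the relay-assisted case.

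Combining these observations gives the claim: whichever branch of Theorem~\ref{t2} applies, the number of policies is dictated solely by the selected mode. In the branch the proposition targets, where direct transmission is energy-efficient precisely when $\mathcal{D}_0>\mathcal{D}^{\star}$ (i.e.\ $a^a_{A,D}<\tfrac{a^a_{A,R}+a^a_{R,D}\alpha}{1+\alpha}$), this yields the single policy $\omega(\mathcal{S}^{\ast}_n,\bar{\mathcal{S}}^{opt}(\mathcal{D}_0))$ for $\mathcal{D}_0>\mathcal{D}^{\star}$ and the pair $\omega(\mathcal{S}^{\ast}_n,\bar{\mathcal{S}}^{opt}(\mathcal{D}_0))$, $\omega(\mathcal{S}^{\ast}_n,\bar{\mathcal{S}}^{opt}_R(\mathcal{D}_0))$ for $\mathcal{D}_0<\mathcal{D}^{\star}$, exactly as stated.

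The step I expect to be the main obstacle is justifying that the delay constraint binds at the optimum, so that inverting $\phi$ really does return $\bar{\mathcal{S}}^{opt}$. This needs a trade-off argument showing that it is never strictly more energy-efficient to over-provision SNR beyond what the delay target requires. In the relay-assisted case it is further complicated by the fact that $\mathcal{D}_{(T)}$ couples $\bar{\mathcal{S}}$ and $\bar{\mathcal{S}}_R$ through both queue-length terms in \eqref{delay-relay}, so the scalar inversion of Lemma~\ref{delay-snr} must be replaced by the two-stage joint search of Algorithm~\ref{alg:framwork}, and one must verify that its fixed point respects the active delay constraint on each hop.
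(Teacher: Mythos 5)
Your overall route is the same as the paper's: the paper offers no real proof of Proposition~\ref{l5}, only the one-line remark that it ``is obtained by the direct transmission and relay-assisted transmission solutions,'' i.e.\ read the mode off Theorem~\ref{t2} and then attach to each active link the energy-efficient AMC policy already computed in Section~III (thresholds $\mathcal{S}^{\ast}_n$ from \cite{Wangkl-2015-EE}, average SNR from Algorithm~\ref{alg:framwork}). Your bookkeeping of one policy for the direct mode versus two policies (source and relay) for the cooperative mode, and your identification of the relevant branch of Theorem~\ref{t2} (the case $a^a_{A,D}<\tfrac{a^a_{A,R}+a^a_{R,D}\alpha}{1+\alpha}$, where $\mathcal{D}_0>\mathcal{D}^{\star}$ selects the direct link), match the paper exactly. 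The one place you go beyond the paper is also the one place you overreach: you assert that the delay constraint is active at the optimum, so that $\bar{\mathcal{S}}^{opt}(\mathcal{D}_0)=\phi^{-1}(\mathcal{D}_0)$. That is neither needed nor generally true. The paper's $\bar{\mathcal{S}}^{opt}(\mathcal{D}_0)$ is simply the maximizer of $\eta_{ee}$ subject to $\mathcal{D}_{(T)}\le\mathcal{D}_0$ as returned by Algorithm~\ref{alg:framwork}; since $\eta_{ee}$ is unimodal in $\bar{\mathcal{S}}$ (it increases at low SNR and decreases at high SNR, cf.\ Fig.~\ref{ee_snr}), the unconstrained peak may already satisfy the delay budget when $\mathcal{D}_0$ is loose, in which case the constraint is slack and your inversion of $\phi$ would return the wrong (smaller) SNR. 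Dropping that step and simply citing the constrained optimizer from Section~III closes the argument and aligns it with the paper.
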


\textcolor{black}{Proposition~\ref{l5} is obtained by the direct transmission and relay-assisted transmission solutions, respectively, where $\bar{\mathcal{S}}^{opt}(\mathcal{D}_0)$ is the energy-efficient average SNR when the maximum tolerable delay is $\mathcal{D}_0$.}
Based on Theorem~\ref{t2} and Proposition~\ref{l5}, we can get the energy efficient solution of switching and modulation scheduling policies for different delay-aware services.



\section{Simulation Results}
In this section, we validate the accuracy of the analytical results for delay-aware energy efficient scheduling over the cooperative wireless system by means of simulations. In all the simulations, the opportunistic spectrum access is considered.

\subsection{System Parameters}
The corresponding simulation parameters are shown as follows: the number of traffic states is $K=2$, the channel states for the inter-node link and direct link is $N=7$, and the buffer size is $M+1=51$ for the source node and relay node; The packet size $L=100$, the maximum retransmission times of packet is $N^{\max}_{r}=6$.
The maximum number of packet arrivals for source node is set as $A=15$, with an average arrival rate, $\lambda_1=1$ packets/time-unit and $\lambda_2=2$ packets/time-unit, the symbol rate $R_s=100$kHz; 
The Nakagami parameter $m=1$, which corresponds to Rayleigh fading channel with no line of sight~(LOS) component. For mathematical tractability, we assume $\bar{\mathcal{S}}_R=\bar{\mathcal{S}}$. 
%
\subsection{Performance Evaluation}
In Fig.~\ref{ee_ratio}, the metric of energy efficiency based on~\eqref{energy-efficiency-relay} is investigated over different values of time allocation ratio $\alpha$. For any time allocation ratio $\alpha\in (0,1)$, we can get the energy efficient SNR as $\bar{\mathcal{S}}^{\mathrm{opt}}(\alpha)$ by Algorithm~\ref{alg:framwork}, and the corresponding energy efficient power allocation can be calculated based on~\eqref{avp} and \eqref{dpower}. In this figure, we plot the energy efficiency based on energy efficient partition method~(EEP) as well as the minimum SNR required to acheive $P_{\mathrm{target}}$~(MSRE) method for comparison~(see \cite{Wangkl-2015-EE}). We can observe from this figure that the energy efficiency of EEP is better than that of MSRE, since we use the energy efficient threshold values for choosing the modulation. In the case of average SNR $\bar{\mathcal{S}}=5$ dB, Fig.~\ref{ee_ratio} shows that the optimal time allocation ratio is $\alpha^{\ast}=0.53$. In the case of $\bar{\mathcal{S}}=8$ dB, the optimal time allocation ratio is the same, since we assume that the average SNR for the source to the relay is the same with that of the relay to the destination.
\begin{figure}[!t]
\begin{center}
\includegraphics [width=2.5in]{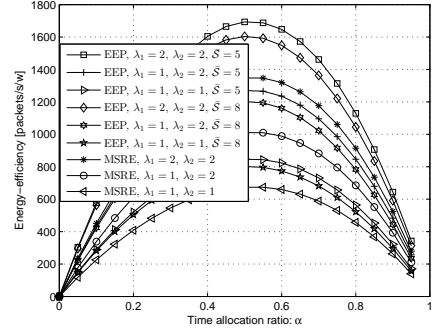}
\caption{The average energy efficiency versus the time allocation ratio $\alpha$ for relay-assisted transmission under the energy efficient partition method~(EEP) and minimum SNR required to acheive $P_{\mathrm{target}}$~(MSRE) method, where $P_{\mathrm{target}}=0.001^{\frac{1}{N_r^{\max}+1}}$.} \label{ee_ratio}
\end{center}
\end{figure}

\textcolor{black}{Denote the spectrum access error as $\epsilon=\frac{|a^a_{A,R}-\hat{a^a_{A,R}}|}{a^a_{A,R}}$, where $\hat{a}^a_{A,R}$ is the estimated spectrum access probability. Fig.~\ref{cognitive} shows the results of energy efficiency considering spectrum access probability and spectrum access error, which considers different spectrum access probabilities for the source-to-relay link. From this figure, it is observed that the energy efficiency performs worse with the spectrum access error than that with ideal spectrum access probability and the gap sizes of energy efficiencies first increase and then decrease as the spectrum access probability $a^a_{A,R}$ grows, which means the throughput increased faster than the power consumption at first.} This is because the idle power consumption increases as the spectrum access probability decreases. It could be concluded from this figure that, the throughput cannot increase as the time allocation ratio $\alpha$ is at large region, although the spectrum access probability increases. In fact, the packet dropping rate of relay-buffer increases as the spectrum access probability increases when the source-to-relay time allocation ratio is at large region, causing the decreasing throughput.
\textcolor{black}{Fig.~\ref{direct_relay} illustrates the influences of average SNR to the energy efficiencies of the direct transmission and relay-assisted transmission for different channel parameter $m$, where we assume that the direct link transmission and relay-assisted transmission have the same average SNR at the destination node. We observe that with the increasing $\bar{\mathcal{S}}$, the energy efficiency of direct transmission gradually lose its superiority in energy efficiency. This corresponds to the analytical results in Theorem~\ref{t2}.}
\begin{figure}[!t]
\begin{center}
\includegraphics [width=2.5in]{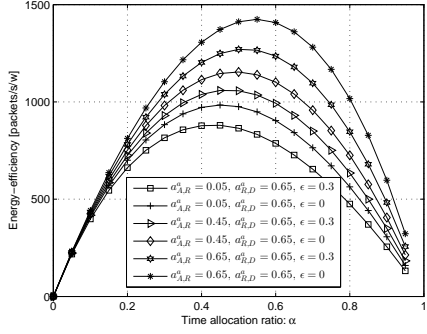}
\caption{The average energy efficiency versus the time allocation ratio $\alpha$ for relay assisted transmission considering the spectrum access probability.} \label{cognitive}
\end{center}
\end{figure}
\begin{figure}[!t]
\begin{center}
\includegraphics [width=2.5in]{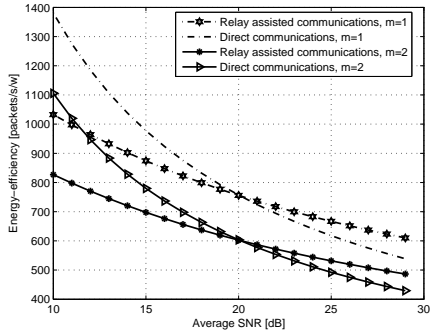}
\caption{The energy efficiencies of direct transmission and relay-assisted transmission, $\alpha=\alpha^{\ast}$.} \label{direct_relay}
\end{center}
\end{figure}

Fig.~\ref{transmit_power} shows the results of the total power consumption in relay-assisted transmission systems including transmit power and idle state power. From this figure we can observe that the total power consumption is increasing with increasing SNR, since the transmit powers for the source node and relay node are increasing with increasing SNR. We can also observe that the power is larger when the time allocation ratio is larger, which comes from the fact that the holding probability for the source to the relay transmission is larger than that of the relay to the destination transmission. At the same time, the gap of the power consumption is increasing when increasing SNR for different time allocation ratios, since larger $\alpha$ has smaller packet dropping rate for the relay-assisted transmission system at large SNR. Then the transmit rate is larger. Based on~\eqref{power}, we can get the result that the total power consumption is larger when $\alpha$ is larger.
\begin{figure}[!t]
\begin{center}
\includegraphics [width=2.5in]{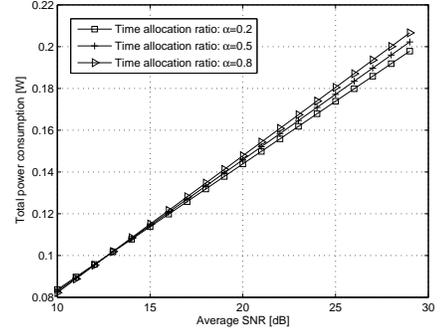}
\caption{The results of the average total power consumption versus average SNR for relay assisted transmission under Nakagami-$m$ fading channel with $m=1$.} \label{transmit_power}
\end{center}
\end{figure}

In Fig.~\ref{Pd} and Fig.~\ref{Pd_buffer}, we show the packet dropping rate of source node $A$, $P^A_d$, for the proposed relay assisted transmission. We can see in Fig.~\ref{Pd} that the packet dropping rate is decreasing when increasing the SNR, since the service rate for the source node is increasing when increasing the SNR. Moreover, we can observe that when the ratio of the source-to-relay time allocation over the relay-to-destination time allocation becomes larger, the packet dropping rate for source node $A$ becomes smaller. When the source-to-relay time allocation is increasing, the dropped number of packets for the source node during each transmission period is decreasing. Then the packet dropping rate becomes smaller. \textcolor{black}{Similarly, it can be seen from Fig.~\ref{Pd_buffer} that the packet dropping rate is decreasing when increasing the buffer size. Due to smaller dropped number of packets, the larger source-to-relay time allocation ratio leads to smaller packet dropping rate.}
\begin{figure}[!t]
\begin{center}
\includegraphics [width=2.5in]{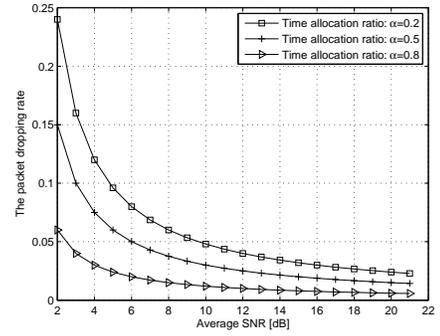}
\caption{The results of the packet dropping rate versus average SNR for relay assisted transmission with different time allocation ratio.} \label{Pd}
\end{center}
\end{figure}
\begin{figure}[!t]
\begin{center}
\includegraphics [width=2.5in]{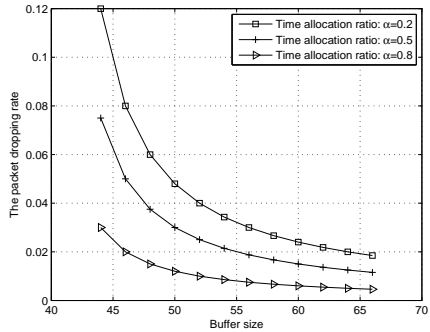}
\caption{The results of the packet dropping rate versus buffer size.} \label{Pd_buffer}
\end{center}
\end{figure}
In Fig.~\ref{delay}, we show the delay performance for the relay-assisted transmission system when the average packet arrival rate $\bar{\lambda}=2$. From this figure we can see that when the SNR is increasing, the delay is decreasing, which matches the results of Lemma~\ref{delay-snr}. We can also see that the delay is smaller when the time allocation ratio is larger. This phenomenon can be understood as follows: when the packet arrival rate is larger, the packet dropping rate of the source node is in dominant place with small time allocation ratio. Then the delay can be smaller when increasing the source-to-relay time allocation.
\begin{figure}[!t]
\begin{center}
\includegraphics [width=2.5in]{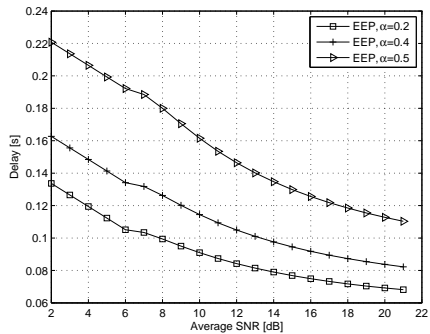}
\caption{The results of the delay for relay assisted transmission with different time allocation ratio.} \label{delay}
\end{center}
\end{figure}

\textcolor{black}{In Fig.~\ref{ee_snr}, the impact of the arrival traffic on the energy efficiency is studied for the relay-assisted transmission, when the time allocation ratio $\alpha=0.6$. This figure shows the results of analysis and simulated values of energy efficiency for different average traffic rates. It can be observed that the variations of analytical and simulation results agree reasonably well.} This figure shows that the energy efficiency is decreasing when increasing the SNR for given average arrival traffic $\bar{\lambda}$ in the regime of large SNR. On the other hand, the energy efficiency is increasing when increasing SNR in the regime of small SNR. In this case, we can get the energy efficient SNR by Algorithm \ref{alg:framwork}. Thus, the energy efficient transmission policy $\omega^{\mathrm{opt}}$ can be obtained, which also guarantees the delay requirement based on \eqref{delay-relay} and \eqref{adelay}. For the case of different $\bar{\lambda}$, the energy efficiency is increasing when increasing the average arrival traffic rate $\bar{\lambda}$. This comes from the fact that the throughput is larger for larger arrival traffic rate.
\begin{figure}[!t]
\begin{center}
\includegraphics [width=2.5in]{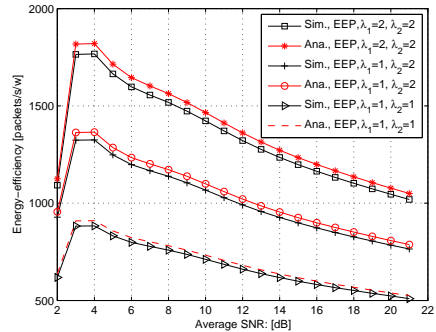}
\caption{The performance of energy efficiency versus average SNR for relay assisted transmission.} \label{ee_snr}
\end{center}
\end{figure}

\section{Conclusion }
In this paper, we present a cross-layer framework in the cognitive cooperative network that determines the energy efficient transmission policy based on both the physical and the upper-layer information. We derive the closed-form expression of delay for the relay-assisted transmission system, which considers the queueing delay of the relay-buffer and source-buffer. With regard to the physical layer transmission, we propose a switching strategy between the direct transmission and the relay-assisted transmission in the cooperative transmission system under different delay requirements. Furthermore, for the relay-assisted transmission, we obtain the energy efficient transmission time allocation ratio for the inter-node link. At last, we get the transmission policies for the source node and relay node that achieve the energy efficient cooperative transmissions for different delay-aware services. Future work is in progress to consider more general case that arbitrary relay nodes are deployed to assist in conveying the source packet in the proposed framework.

\appendices
\section{proof of Lemma~\ref{total-service}}\label{a1}
According to the packet error rate $P_1$ in the first phase, the packet service time $\delta_n$ in state $n$ has the following probability mass function based on the M/G/1 queue model:
\begin{equation}
P\left \{\delta_n=\epsilon_n+\top\bar{\epsilon}_n \right \}=(1-P_1)P_1^{\top}, \top=0,1,\cdots,N^{\max}_r,\label{r-st}
\end{equation}
where $\epsilon_n$, $\bar{\epsilon}_n$ and $\top$ represent the packet transmission time when the channel is in state $n$, the average packet transmission time for retransmissions and the retransmission times, respectively.
When the packet is transmitted from the source-to-destination link, the packet successfully received time at channel state $n$ and average retransmission time for each retransmission are
\begin{equation}
\begin{aligned}
\tau_n^{l_{A,D}}=\frac{L}{b_nR_{s}a^a_{A,D}},\, \bar{\tau}_n^{l_{A,D}}=\sum_{n=1}^N\tau_n^{l_{A,D}}P^{S,D}_r(n),
\end{aligned}
\end{equation}
respectively. 
Due to the independence of inter-node link, the channel state of relay-to-destination link would stay at any state of $\{1,2,\cdots,N\}$ when source-to-destination link is at state $n$. Then, we use the average rate of $\bar{b}=\sum_{n=1}^{N}b_n P_r(n)$ as the transmission modulation for relay-to-destination link. Similarly, the packet successfully received time from relay-assisted transmission at channel state $n$ and average retransmission time for each retransmission are
\begin{equation}
\begin{aligned}
\tau_n^{l_{A,R},l_{R,D}}=\frac{L}{b_nR_{s}a^a_{A,R}}+\frac{L}{\bar{b}R_{s}a^a_{R,D}}, \\ \text{and}\quad\bar{\tau}_n^{l_{A,R},l_{R,D}}=\sum_{n=1}^N\tau_n^{l_{A,R},l_{R,D}}P_r(n),
\end{aligned}
\end{equation}
respectively.
In the sequel, the average transmission time of the packet at channel state $n$ and average retransmission time for each retransmission are
\begin{equation}
\begin{aligned}
\epsilon_n=\tau_n^{l_{A,D}}(1-P_{LD})+\tau_n^{l_{A,R},l_{R,D}}P_{LD}, \\ \text{and}\quad\bar{\epsilon}_n=\bar{\tau}_n^{l_{A,D}}(1-P_{LD})+\bar{\tau}_n^{l_{A,R},l_{R,D}}P_{LD},
\end{aligned}
\end{equation}
respectively,
which can be denoted as \eqref{r-tau}.

Based on~\eqref{r-st}, the average service time at channel state $n$ for the source transmission can be derived as
\begin{equation}\small
\begin{aligned}
E\left \{ \delta_n \right \}&=\epsilon_n (1-P_1)+\sum_{\top=1}^{N^{\max}_r}(\epsilon_n+\top\bar{\epsilon}_n) (1-P_1)P_1^{\top}\\
&=\frac{f(\epsilon_n)+\bar{\epsilon}_nP_1\left [ 1-P_1^{N^{\max}_r}\left(1+N^{\max}_r (1-P_1)\right) \right]}{1-P_1},\label{r-mst}
\end{aligned}
\end{equation}
where $f(\epsilon_n)=\epsilon_n(1-P^{N^{\max}_r+1}_1)(1-P_1)$.

Recall that the ratio of the time allocation in the phase of source-to-relay over the whole period is $\alpha$. To see this, based on~\eqref{r-tau} and~\eqref{r-mst}, we can derive the service rate $\chi_n$ at channel state $n$ as
\begin{equation}\small
\begin{aligned}\label{r-srn}
\chi_n=\frac{\alpha}{E\left \{ \delta_n \right \}}=\frac{\alpha(1-P_1)}{f(\epsilon_n)+\bar{\epsilon}_nP_1\left [ 1-P_1^{N^{\max}_r}\left(1+N^{\max}_r (1-P_1)\right) \right]}.
\end{aligned}
\end{equation}

\section{proof of Theorem~\ref{t1}}\label{a2}
Since the average transmission rate from the source to the relay is $\bar{\chi}$, the average packet arrival rate at the relay is $\bar{\chi}$, which can be derived as $\bar{\chi}=\sum_{n=1}^{N}\chi_n P_r(n)$ based on Lemma~\ref{total-service}. Then, we can get the probability distribution of the instantaneous arrival rate $a_t^R$ at the relay as
\begin{equation}\label{possion}
P\left(a_t^R=a\right)=\begin{cases}
P_r(n), & \text{ if } a\in\left\{\chi_1,\cdots, \chi_N\right\}; \\
\frac{\lambda^ae^{-\lambda}}{a!}, & \text{ otherwise. }
\end{cases}
\end{equation}
Assume that the queue state of the relay node is $S^R_t$ at the start of the $t$-th time slot, and $S^R_t\in \mathbb{S}=\left\{s^{'}_0=0, s^{'}_1=1, \cdots, s^{'}_M=M\right\}$. While AMC transmission is used by the relay, it is known that $\chi_t^{'}\in \Psi^{'}=\left\{\chi^{'}_1, \cdots, \chi^{'}_N\right\}$ can be denoted as the number of packets removed from the queue of the relay. Then, to obtain the resulting transition of the queue state at the relay, both buffer size and the service rate of relay node need to be considered, shown as
\begin{equation}
S^R_t=\mbox{min}\left\{M, \mbox{max}\left\{0, S^R_{t-1}-\chi_t^{'}\right\}+\bar{\chi}\right\}.
\end{equation}

In order to express the system state transition, let $(\chi_t^{'}, S^R_{t-1})$ denote the joint service rate and the queue state, and let $\ss_{(\chi^{'}_x,s^{'}_q),(\chi^{'}_y,s^{'}_l)}$ denote the transition probability from $(\chi_t^{'}=\chi^{'}_x,S^R_{t-1}=s^{'}_q)$ to $(\chi_{t+1}^{'}=\chi^{'}_y,S^R_t=s^{'}_l)$, where $(\chi^{'}_x,s^{'}_q)\in \Psi^{'}\times \mathbb{S}$, and $(\chi^{'}_y,s^{'}_l)\in \Psi^{'}\times \mathbb{S}$. Then, the state transition probability matrix of the relay transmission system can be organized in a block form as
\begin{equation}
\mathbf{\Upsilon}=\left [ \Theta_{\chi^{'}_i,\chi^{'}_j} \right ],1\leq i,j\leq N,
\end{equation}
where
\begin{equation}\small
\Theta_{\chi^{'}_i,\chi^{'}_j}=\begin{bmatrix}
\Xi_{(\chi^{'}_1),(\chi^{'}_1)} & \Xi_{(\chi^{'}_1),(\chi^{'}_2)} & \cdots & \Xi_{(\chi^{'}_1),(\chi^{'}_N)}\\
\Xi_{(\chi^{'}_2),(\chi^{'}_1)} & \Xi_{(\chi^{'}_2),(\chi^{'}_2)} & \cdots & \Xi_{(\chi^{'}_2),(\chi^{'}_N)}\\
\vdots & \vdots & \ddots  & \vdots\\
\Xi_{(\chi^{'}_N),(\chi^{'}_1)} & \Xi_{(\chi^{'}_N),(\chi^{'}_2)} & \cdots & \Xi_{(\chi^{'}_N),(\chi^{'}_N)}
\end{bmatrix},
\end{equation}
the submatrix $\Xi_{(\chi^{'}_x),(\chi^{'}_y)}$ can be shown in~\eqref{transm-relay}.
\begin{figure*}[!t]
\normalsize
\begin{equation}\small\label{transm-relay}
\Xi_{(\chi^{'}_x),(\chi^{'}_y)}=\begin{bmatrix}
\ss_{(\chi^{'}_x,s^{'}_0),(\chi^{'}_y,s^{'}_0)} & \ss_{(\chi^{'}_x,s^{'}_0),(\chi^{'}_y,s^{'}_1)} & \cdots & \ss_{(\chi^{'}_x,s^{'}_0),(\chi^{'}_y,s^{'}_M)}\\
\ss_{(\chi^{'}_x,s^{'}_1),(\chi^{'}_y,s^{'}_0)} & \ss_{(\chi^{'}_x,s^{'}_1),(\chi^{'}_y,s^{'}_1)} & \cdots & \ss_{(\chi^{'}_x,s^{'}_1),(\chi^{'}_y,s^{'}_M)}\\
\vdots & \vdots & \ddots  & \vdots\\
\ss_{(\chi^{'}_x,s^{'}_M),(\chi^{'}_y,s^{'}_0)} & \ss_{(\chi^{'}_x,s^{'}_M),(\chi^{'}_y,s^{'}_1)} & \cdots & \ss_{(\chi^{'}_x,s^{'}_M),(\chi^{'}_y,s^{'}_M)}
\end{bmatrix}.
\end{equation}
\hrulefill
\vspace*{10pt}
\end{figure*}

Finally, the system state transition probability $\ss_{(\chi^{'}_x,s^{'}_q),(\chi^{'}_y,s^{'}_l)}$ can be obtained as~\eqref{transpr-relay},
\begin{figure*}[!t]
\normalsize
\begin{equation}\small\label{transpr-relay}
\begin{aligned}
\ss_{(\chi^{'}_x,s^{'}_q),(\chi^{'}_y,s^{'}_l)}&=\ss(\chi^{'}_{t+1}=\chi^{'}_y,S^{R}_t=s^{'}_l|\chi^{'}_t=\chi^{'}_x,S^{R}_{t-1}=s^{'}_q)\\
&=\ss(\chi^{'}_{t+1}=\chi^{'}_y|\chi^{'}_t=\chi^{'}_x)\ss(S^{R}_t=s^{'}_l|S^{R}_{t-1}=s^{'}_q,\chi^{'}_t=\chi^{'}_x)\\
&=\ss_{\chi^{'}_x,\chi^{'}_y}\ss(S^{R}_t=s^{'}_l|S^{R}_{t-1}=s^{'}_q,\chi^{'}_t=\chi^{'}_x).
\end{aligned}
\end{equation}
\hrulefill
\vspace*{10pt}
\end{figure*}
where $\ss_{\chi^{'}_x,\chi^{'}_y}$ represents the channel state transition probability. Due to the fact that the channel state transition is independent of other states, the second equality in~\eqref{transpr-relay} can be obtained.

By considering the slow fading channel, the channel state transition only happens between adjacent states. Then, the transition probability from non adjacent states is zero, and the nonzero elements $\ss_{\chi^{'}_x,\chi^{'}_y}$ is described in~\cite{2004-Liu-AMC}.
In addition, with regard to the conditional probability of $\ss(S^{R}_t=s^{'}_l|S^{R}_{t-1}=s^{'}_q,\chi^{'}_t=\chi^{'}_x)$, it can be derived in~\eqref{transpp-relay}. In all, based on~\eqref{transpr-relay} and~\eqref{transpp-relay}, the transition probability of the system state for the relay node can be obtained.
\begin{figure*}[!t]
\normalsize
\begin{equation}\small\label{transpp-relay}
\begin{aligned}
\ss(S^{R}_t=s^{'}_l|S^{R}_{t-1}=s^{'}_q,\chi^{'}_t=\chi^{'}_x)
=\begin{cases}
P(a_t^R=s^{'}_l-\mbox{max}\left \{0,s^{'}_q-\chi^{'}_x  \right \}), & \text{ if } 0\leq s^{'}_l<M, \\
1-\sum_{0\leq s^{'}_l<M}\ss(S^{R}_t=s^{'}_l|S^{R}_{t-1}=s^{'}_q,\chi^{'}_t=\chi^{'}_x), & \text{ if } s^{'}_l=M.
\end{cases}
\end{aligned}
\end{equation}
\hrulefill
\vspace*{10pt}
\end{figure*}

Notably particularly, by using the results from~\cite[Theorem 4.1]{2007-Ross-Probability}, we obtain the fact that the Markov chain $\left \{(S^{R}_t,\chi^{'}_t), t\geq 0  \right \}$ exists stationary distribution. 
From the definition of irreducible Markov chain, we know that the Markov chain only has one class. 
Under this circumstance, while any state $(\chi^{'}_x,s^{'}_q)\in \Psi^{'}\times\mathbb{S}$ and $(\chi^{'}_y,s^{'}_l)\in \Psi^{'}\times\mathbb{S}$, we only need to prove that $(\chi^{'}_x,s^{'}_q)$ can access $(\chi^{'}_y,s^{'}_l)$.

In the following, we will prove that the Markov chain of relay system is irreducible. Based on the packet arrival probability, we can obtain the transition probability
\begin{equation}
\ss_{(\chi^{'}_x,s^{'}_q)|(\chi^{'}_x,s^{'}_l)}=P(a_t^R=s^{'}_l-\mbox{max}\left\{0,s^{'}_q-\chi^{'}_x\right\}).
\end{equation}
Thus, state $(\chi^{'}_x,s^{'}_q)$ can access state $(\chi^{'}_x,s^{'}_l)$, i.e.,
\begin{equation}\label{f2-relay}
(\chi^{'}_x,s^{'}_q)\Rightarrow (\chi^{'}_x,s^{'}_l)
\end{equation}
has nonzero probability.
Note that the channel has finite states. Thus, there always exists transition path from the channel state $x$ to the state $y$ over the neighbour state. In this case, state $\chi^{'}_x$ can go to state $\chi^{'}_y$, and
\begin{equation}
\ss_{(\chi^{'}_x,s^{'}_l)|(\chi^{'}_y,s^{'}_l)}=P(a_t^R=s^{'}_l-\mbox{max}\left\{0,s^{'}_l-\chi^{'}_x\right\}).
\end{equation}
Therefore, state $(\chi^{'}_x,s^{'}_l)$ can access state $(\chi^{'}_y,s^{'}_l)$, i.e.,
\begin{equation}\label{f3-relay}
 (\chi^{'}_x,s^{'}_l)\Rightarrow (\chi^{'}_y,s^{'}_l)
\end{equation}
also has nonzero transition probability. With the results of~\eqref{f2-relay} and~\eqref{f3-relay}, we prove that any state $(\chi^{'}_x,s^{'}_q)$ can access any state $(\chi^{'}_y,s^{'}_l)$ in the Markov chain of relay system. Then, we obtain that the system $\left \{(S^{R}_t,\chi^{'}_t), t\geq 0  \right \}$ is irreducible.

When implementing the conclusion of~\cite{2007-Ross-Probability}, the states from a finite irreducible Markov chain are recurrent. In all, the stationary distribution of the Markov process $\left \{(S^{R}_t,\chi^{'}_t), t\geq 0  \right \}$ exists.
Then, we can get the stationary distribution by solving
\begin{equation}\label{stationary-relay}
\boldsymbol{\pi^{'}}=\boldsymbol{\pi^{'}}\Upsilon, \sum_{s^{'}\in \mathbb{S},\chi^{'}\in \Psi^{'}}\pi^{'}_{(s^{'},\chi^{'})}=1.
\end{equation}
while the number of arrival packets is larger than that of remaining space, i.e., $\bar{\chi}>M-(S^R_{t-1}-\chi^{'}_t)$, the packets are dropped by the relay-buffer.
By using the stationary distribution, the average number of dropped packets can be expressed as
\begin{equation}\small
\begin{aligned}
E\left\{\Delta^R\right\}&=\sum_{s^{'}\in \mathbb{S}, \chi^{'}\in \Psi^{'}}\Delta^R\cdot\ss(\bar{\chi}|S^R_{t-1}=s^{'}, \chi^{'}_t=\chi^{'})\\
&=\sum_{ s^{'}\in \mathbb{S}, \chi^{'}\in \Psi^{'}}\left [(\bar{\chi}-(M-(S^R_{t-1}-\chi^{'}_t)))\times \pi^{'}_{s^{'},\chi^{'}}\right ].
\end{aligned}
\end{equation}
In all, the packet dropping rate at the relay node is $P^R_d=\frac{E\left\{\Delta^R\right\}}{\bar{\chi}}$.

\section{proof of Lemma~\ref{total-power}}\label{a3}
\textcolor{black}{In Phase source node to relay node, source A transmits data to the relay R with power $\mathcal{P}$. Due to the opportunistic spectrum access, the average transmit power is $a^a_{A,R}\mathcal{P}$, the average circuit power consumption is $a^u_{A,R}\mathcal{P}_0$.}

\textcolor{black}{On the other hand, in Phase relay node to destination node, relay R transmits data to the destination D with power $\mathcal{P}_R$. Similarly, the average transmit power and circuit power of the relay node are $a^a_{R,D}\mathcal{P}_R$ and $a^u_{R,D}\mathcal{P}_0$, respectively. In all, the total energy consumption of the secondary relay-assisted transmission networks is given by
\begin{equation}
\zeta=(a^a_{A,R}\mathcal{P}+a^u_{A,R}\mathcal{P}_0)T_1+(a^a_{R,D}\mathcal{P}_R+a^u_{R,D}\mathcal{P}_0)T_2.
\end{equation}}

\section{proof of Theorem~\ref{t2}}\label{a4}
For the first case, suppose $\mathcal{D}_0\rightarrow 0$, based on lemma~\ref{delay-snr}, the average SNR should satisfy $\bar{\mathcal{S}},\bar{\mathcal{S}}_{R}\rightarrow \infty$, causing $P_d\rightarrow 0$ and $P_{0}\rightarrow 0$. Thus, the average throughput $\mathcal{R}=\bar{\lambda}$. At the same time, the total power consumption is dominated by the transmit power. Consequently, based on Lemma~\ref{total-power}, the total average energy consumption of source-to-destination direct transmission is $\zeta=a^a_{A,D}T\mathcal{P}_{S,D}$. In all, the energy efficiencies of the direct transmission and the relay-assisted transmission are
\begin{equation}\small
\begin{aligned}
\lim_{\mathcal{D}_0\rightarrow 0}\eta_{\mathrm{ee,direct}}&=\frac{\bar{\lambda}}{a^a_{A,D}\mathcal{P}},~\text{and}\\
\lim_{\mathcal{D}_0\rightarrow 0}\eta_{\mathrm{ee,relay}}&=\frac{T\bar{\lambda}}{a^a_{A,R}T_1\mathcal{P}+a^a_{R,D}T_2\mathcal{P}}\\
&=\frac{\bar{\lambda}}{a^a_{A,R}\cdot\alpha\mathcal{P}+a^a_{R,D}(1-\alpha)\mathcal{P}},
\end{aligned}
\end{equation}
respectively. From these two equations, we can obtain that if $a^a_{A,D}>\frac{a^a_{A,R}+a^a_{R,D}\cdot\alpha}{1+\alpha}$, the energy efficiency of relay-assisted transmission is larger that that of the direct transmission. The packet should be transmitted with the cooperative links $l_{A,R}$ and $l_{R,D}$ to ensure energy efficient transmission.

For the second case, as $\bar{\mathcal{S}},\bar{\mathcal{S}}_{R}\rightarrow 0$, suppose $\mathcal{D}_0\rightarrow \infty$, causing $P_{LD}\rightarrow 1$. Then, the idle state power dominates the total power consumption. Then the energy consumptions of direct transmission and relay-assisted transmission are
\begin{equation}
\begin{aligned}
\zeta_{\mathrm{direct}}&=(a^u_{A,D}\mathcal{P}_0)T,~\text{and} \\
\zeta_{\mathrm{relay}}&=(a^u_{A,R}\mathcal{P}_0)T_1+(a^u_{R,D}\mathcal{P}_0)T_2,
\end{aligned}
\end{equation}
respectively. Note that spectrum access probabilities satisfy $a^u_{A,D}+a^a_{A,D}=1$, $a^u_{A,R}+a^a_{A,R}=1$ and $a^u_{R,D}+a^a_{R,D}=1$. Similarly, we can also obtain that if $a^a_{A,D}>\frac{a^a_{A,R}+a^a_{R,D}\cdot\alpha}{1+\alpha}$, the energy efficiency of the direct transmission is larger than that of the relay-assisted transmission.
Therefore, the packet should be transmitted with the cooperative links $l_{A,R}$ and $l_{R,D}$.


\bibliography{mybib}
\bibliographystyle{ieeetr}

\end{document}